\documentclass{article}






\usepackage{natbib}
\usepackage{algorithm2e}
\usepackage{graphicx}
\usepackage{float}
\usepackage{bbm}
\usepackage[title]{appendix}
\usepackage{mathtools}    
\usepackage{amsmath, amssymb, enumerate, amsthm}
\usepackage{mathrsfs}
\usepackage{booktabs}
\usepackage{enumerate}
\usepackage[utf8]{inputenc} 
\usepackage[T1]{fontenc}    
\usepackage{hyperref}       
\usepackage{url}            
\usepackage{booktabs}       
\usepackage{amsfonts}       
\usepackage{nicefrac}       
\usepackage{microtype}      
\usepackage{xcolor}         

\newcommand{\ind}[1]{\mathbbm{1}\left\{#1\right\}}

\newcommand{\rdd}{\mathbb{R}^{d}}
\newcommand{\re}{\mathbb{R}}

\newcommand{\Prr}[1]{\Pr\left(#1\right)}

\newcommand{\norm}[1]{\left\lVert#1\right\rVert}
\newcommand{\normm}[1]{\lVert#1\rVert}

\DeclareMathOperator{\TV}{TV}

\newcommand{\conp}{\stackrel{\text{p}}{\rightarrow}}

\newcommand{\E}[2]{\mathbb{E}_{#1}#2}
\DeclareMathOperator*{\argmax}{ {\rm argmax}}
\DeclareMathOperator*{\lap}{ {\rm Laplace}}

\DeclareMathOperator{\iqr}{IQR}

\newcommand{\cM}{\mathcal{M}}
\newcommand{\cD}{\mathcal{D}}
\newcommand{\cA}{\mathcal{A}}

\newcommand{\cG}{\mathcal{G}}
\newcommand{\cP}{\mathcal{P}}
\newcommand{\cQ}{\mathcal{Q}}
\newcommand{\cR}{\mathcal{R}}

\newcommand{\cN}{\mathcal{N}}
\newcommand{\cF}{\mathcal{F}}

\newcommand{\sF}{\mathscr{F}}

\newcommand{\sB}{\mathscr{B}}
\newcommand{\VC}{\operatorname{VC}}

\newcommand{\bbN}{\mathbb{N}}

\DeclareMathOperator{\KL}{KL}

\newtheorem{theorem}{Theorem}

\newtheorem{corollary}[theorem]{Corollary}

\newtheorem{lemma}[theorem]{Lemma}

\theoremstyle{remark}

\theoremstyle{definition}

\newtheorem{definition}{Definition}

\newtheorem{condition}{Condition}

\title{Differentially Private Boxplots}

%

\author{%
  Kelly~Ramsay and Jairo~Diaz-Rodriguez \\
  \\
  Department of Mathematics and Statistics\\
  York University\\
  Toronto, Ontario M3J 1P3 \\
}

\begin{document}
\maketitle
\begin{abstract}
Despite the potential of differentially private data visualization to harmonize data analysis and privacy, research in this area remains  underdeveloped. 
Boxplots are a widely popular visualization used for summarizing a dataset and for comparison of multiple datasets. 
Consequentially, we introduce a differentially private boxplot. 
We evaluate its effectiveness for displaying location, scale, skewness and tails of a given empirical distribution. 
In our theoretical exposition, we show that the location and scale of the boxplot are estimated with optimal sample complexity, and the skewness and tails are estimated consistently, which is
not always the case for a boxplot naively constructed from a 
single existing differentially private quantile algorithm. 
As a byproduct of this exposition, we introduce several new results concerning private quantile estimation. 
In simulations, we show that this boxplot performs similarly to a non-private boxplot, and it outperforms the naive boxplot. 
Additionally, we conduct a real data analysis of Airbnb listings, which shows that comparable analysis can be achieved through differentially private boxplot visualization. 
\end{abstract}
\section{Introduction}
It is now well established that differential privacy \citep{Dwork2006} is a powerful framework for protecting the privacy of individuals' data. 
As a result, a plethora of differentially private data analysis tools have been developed over the last several decades  \citep{dwork2008differential, ji2014differential, Liu2023}. 
However, one area that has been considerably underdeveloped is that of differentially private visualization. 
This is despite the fact that data visualization is a key tool in exploratory analysis, which is an essential component of data analysis. 
In fact, in a recent study of data analysts and practitioners, \citet{Garrido2023} found that ``most analysts employed aggregations and visualizations to fulfill their use case,'' rather than machine learning models. 

Differentially private versions of several popular visualizations have been considered in the literature. 
\citet{Nanayakkara2022} developed a plot to visualize privacy utility trade-offs. 
\citet{Zhou2022} developed \texttt{DPVisCreator}, a visualization system, which relies on publishing a synthetic dataset. 
Visualizations in other privacy models were considered by 
\citet{Dasgupta2011, Dasgupta2013, Dasgupta2019}, and recently reviewed by 
\citet{Bhattacharjee2020}.
There has been substantial study of the differentially private histogram \citep{hay2009boosting, li2010optimizing,acs2012differentially, kellaris2013practical, xu2013differentially,Zhang2014,Budiu2022}. 
Visualizations based on clustering \citep{Hongde2014}, for mobility data \citep{He2016}, heatmaps \citep{Zhang2016} and scatterplots  \citep{Panavas2024} have also been considered.

Surprisingly, despite its widespread use, a differentially private boxplot has not been directly studied. 
Boxplots, invented by \citet{Tukey1977}, are used for visualizing the characteristics of a univariate sample, as well as for visualizing the relationship between a continuous variable and a categorical variable. 
Despite its simplicity, many key distributional characteristics can be evaluated from the boxplot: namely, location, scale, skewness, and tails. 
This has made it popular among practitioners.

In addition, its simplicity is beneficial from a privacy standpoint. 
The boxplot only requires estimating a few summary statistics in order to convey the distributional characteristics of a univariate sample, making the boxplot efficient in its use of privacy budget. 
For instance, by contrast, a histogram can also be used to convey the same information. However, a histogram is generally a consistent estimate of the population density, which intuitively, should then require more noise to ensure privacy.

Motivated by these observations, we take the first steps in developing and studying differentially private boxplots. 
First, one can observe that a boxplot is made up of sample quantiles. 
It is then natural to first consider how well a boxplot constructed from naive applications of existing differentially private quantile algorithms performs. Succinctly:
\begin{center}
    \textbf{How well does the ``naive'' private boxplot perform?} 
\end{center}
On this front, with a combination of new theoretical results and simulations, we demonstrate that this method may fail on data coming from common distribution families. 
We focus on the state-of-the-art algorithms \texttt{JointExp} \citep{Gillenwater21a}, \texttt{PrivateQuantile} \citep{Smith2011}, \texttt{ApproxQuantile} \citep{Kaplan22}, and \texttt{unbounded} \citep{Durfee2023}. 
Due to limited space, our rationale for this choice, and a full literature review on private quantiles, including how our results build on this literature, is relegated to Appendix \ref{app::pqe}. 
In summary:
\begin{itemize}
\item
We prove that \texttt{JointExp}, \texttt{PrivateQuantile} and \texttt{ApproxQuantile} are (almost always) inconsistent for extreme quantiles (Lemma~\ref{lem::JE-inc}), which makes them a poor choice for generating the whiskers in the boxplot. (To our knowledge, no results concerning the consistency of extreme private quantiles previously existed. \citet{Lalanne2023a} proved consistency for inner quantiles, under the assumption of a lower bound on the density, which would not apply to extreme quantiles.) 
\item We prove that \texttt{unbounded} is consistent for extreme quantiles (Lemma~\ref{lem::unb-consistent}) . 
This result also confirms an empirical observation of \citet{Durfee2023}, which says that the \texttt{unbounded} quantile algorithm performs better than existing algorithms for estimating extreme quantiles. (To our knowledge, no statistical results concerning \texttt{unbounded} previously existed.) 
\item
Our simulations reveal that the \texttt{unbounded} algorithm does not perform as well as \texttt{JointExp}, \texttt{ApproxQuantile} or \texttt{PrivateQuantile} for estimating the inner quantiles of Gaussian data. This makes it a poor choice for generating the box in the plot. We support this finding by providing matching (up to logarithmic factors) upper and lower bounds for the sample complexity of \texttt{JointExp}, \texttt{ApproxQuantile} and \texttt{PrivateQuantile} for estimating inner quantiles, under general distributional assumptions (Theorem~\ref{thm::JE_sc} and Theorem~\ref{thm::minimax_lb}). In particular, we only assume that the density is positive in a neighborhood of the theoretical quantile and bounded everywhere (Condition~\ref{cond::pop_m}). Previous concentration results for these algorithms assume the population density has bounded support \citep{Lalanne23b}. The lower bound is new, and does not follow from the lower bound on private medians \citep{tzamos2020optimal}. 
\end{itemize}
Given that the naive boxplot is not satisfactory, we then sought to answer:
\begin{center}
    \textbf{Can we develop a novel private boxplot, which performs well at generating both the box and whiskers?} 
\end{center}
To this end, building on the aforementioned results concerning private quantiles, we present a novel differentially private boxplot, which we call \texttt{DPBoxplot}. 
We then evaluate its ability to represent key features--location, scale, skewness, and tails--both theoretically and empirically. 
Figure~\ref{fig:dpboxplot} provides an informal, graphical illustration of our approach. 
Specifically, we use \texttt{JointExp} for estimating the quartiles and the median to construct the box, and \texttt{unbounded} for determining extreme values necessary for the whiskers and the number of outliers. 
Instead of revealing the precise locations of outliers, we privately disclose their count using the Laplace mechanism. In summary, our main contributions concerning the new differentially private boxplot are:

\begin{itemize}
\item We carefully combine private quantile estimators, ensuring that those used for the location and scale of the boxplot are estimated optimally (Theorem~\ref{thm::JE_sc}), while the extreme quantiles required to depict skewness and tails remain consistent (Theorem~\ref{thm::whisker-out-consistent}).
\item Extensive simulations demonstrate that \texttt{DPBoxplot} is often more accurate than those constructed naively using existing differentially private quantile methods, see Section~\ref{sec::sim-study}.
\item We conduct a differentially private exploratory data analysis, showcasing the practical utility and shortcomings of \texttt{DPBoxplot}, see Section~\ref{sec::case-study}.
\end{itemize}

\section{Preliminaries}

\begin{figure}
\centering
\includegraphics[width=0.98\linewidth]{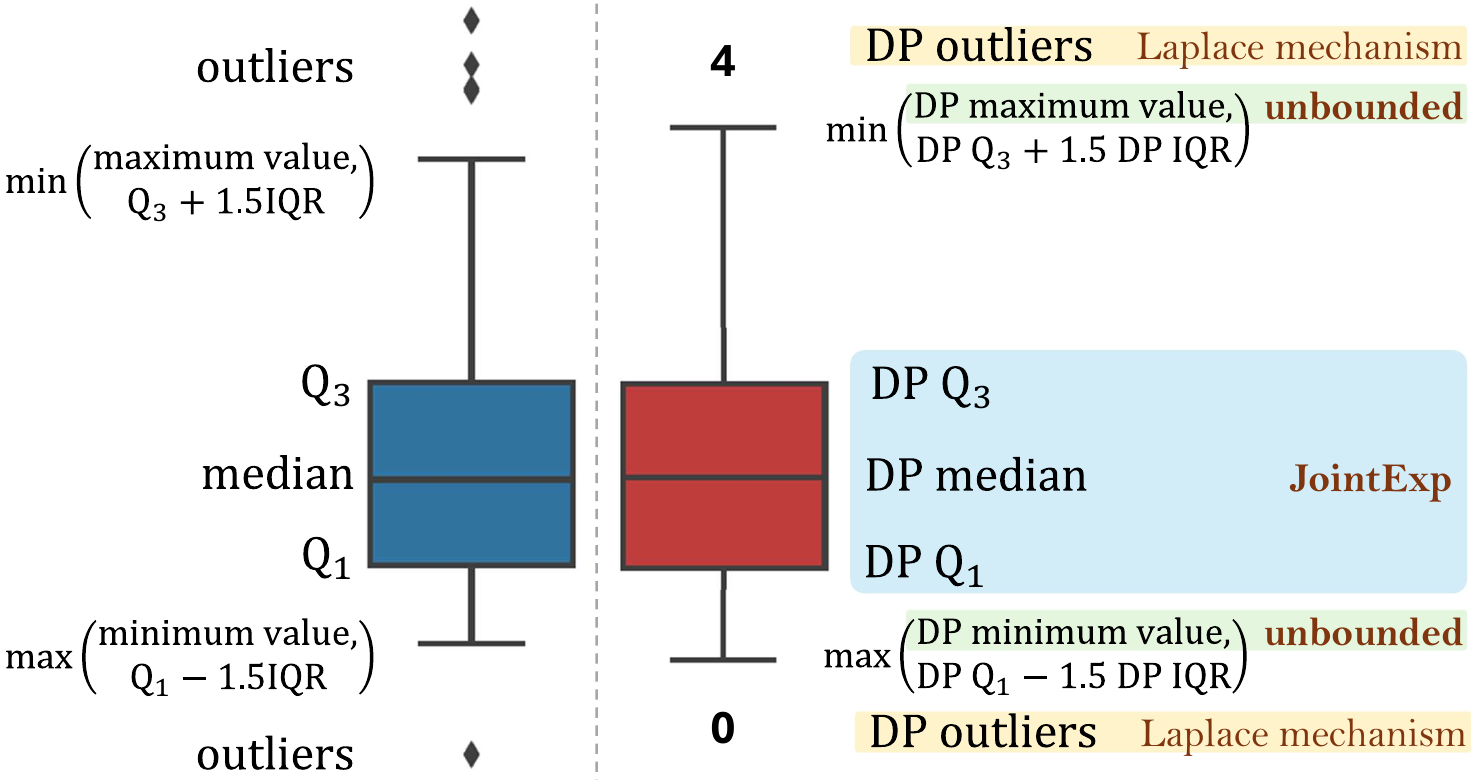}
\caption{Graphical illustration of algorithms for the traditional non-private boxplot (left) and our proposed differentially private boxplot \texttt{DPBoxplot} (right).}
\label{fig:dpboxplot}
\end{figure}

First, we briefly review the boxplot. It is also helpful to introduce some notation used throughout the paper. 
We define a dataset of size $n\in \bbN$ to be a set of $n$ real numbers. 
Let $\cD_n$ be the set of datasets of size $n$ and let $\mathbf{X}_n=\{X_1,\ldots,X_n\}$ be a dataset size $n$. 
In this work, we assume the analyst has access to a dataset, denoted by $\mathbf{X}_{n}$. 
We will also assume that $\mathbf{X}_n$ consists of $n$ independent draws of from a common population distribution or measure, denoted $\nu$. 
We consider the traditional boxplot, which consists of a box with a line drawn through it, with two whiskers emanating from the lower and upper bounds of the box, see Figure \ref{fig:dpboxplot}. The box is made up of the median and the quartiles, while the whiskers, represent the tails and skewness of the data. Points beyond the whiskers are explicitly plotted. Together, 
this 5 number summary 
describes the empirical distribution of the data, including its location, scale, skewness and tails. A more detailed review of the boxplot can be seen in Appendix~\ref{app::boxplot}. 


\subsection{Differential privacy}
Next, we introduce differential privacy. 
First, we say that $\mathbf{Y}_n\in\cD_n$, is adjacent to $\mathbf{X}_n$ if $\mathbf{X}_n$ and $\mathbf{Y}_n$ differ in exactly one point. 
Let $\cA_n$ be the set of pairs of adjacent datasets of size $n$. 
Next, denote by $P_{\mathbf{X}_n}$ a probability measure which depends on $\mathbf{X}_n$. 
For a space $S$, let $\sB(S)$ denote the Borel sets of $S$ and let $\cM_1(S)$ be the set of Borel probability measures over $S$. 
Formally, $P_{\mathbf{X}_n}$ is a map from $\re^n$ to $\cM_1(\re^d)$, for some $d\in\bbN$.
We can now define differential privacy. 
\begin{definition}
Given $\epsilon>0$, the quantity $\theta\sim P_{\mathbf{X}_n}$ is $\epsilon$-differentially private if 
    \begin{equation}\label{eqn::priv_cond}
    \sup_{(\mathbf{X}_n,\mathbf{Y}_n)\in \cA_n}\sup_{B\in \sB(\rdd)}\frac{P_{\mathbf{X}_n}(B)}{P_{\mathbf{Y}_n}(B)}\leq e^\epsilon.
    \end{equation}
\end{definition}
\noindent Here, $\theta$ is a $d$-dimensional differentially private quantity and $\epsilon$ is the privacy budget, where values of $\epsilon$ which are closer to zero enforce stricter privacy constraints. 

The proposed differentially private boxplot makes use of existing differentially private algorithms.  
Specifically, the Laplace mechanism, and algorithms for computing differentially private quantiles. 
The Laplace mechanism \citep{Dwork2006} is a fundamental mechanism for constructing differentially private statistics. 
Let $Z$ be a standard Laplace random variable. 
Specifically, for a statistic $T\colon \cD_n\to \re$, define its global sensitivity to be $\sup_{(\mathbf{X}_n,\mathbf{Y}_n)\in \cA_n} |T(\mathbf{X}_n)-T(\mathbf{Y}_n)|$. 
If $T$ has global sensitivity bounded by 1, then $\tilde T=T(\mathbf{X}_n)+Z/\epsilon$
is a differentially private quantity. This is known as the Laplace mechanism. 

\subsection{Private quantiles}
Of course, a private boxplot relies on differentially private quantiles. 
The proposed private boxplot relies on the \texttt{unbounded} algorithm of \citet{Durfee2023} and the \texttt{JointExp} algorithm of \citet{Gillenwater21a}. 
These algorithms are the main focus of our theoretical exposition. 
However, the \texttt{PrivateQuantile} is a special case of \texttt{JointExp} and so our results also apply to \texttt{PrivateQuantile}. 
In addition, combining our results with those of \citet{Kaplan22} allows our results to also hold for \texttt{ApproxQuantile}. 
We do, however, consider the performance of the naive boxplot constructed from all three of these algorithms in our simulation study, see Section~\ref{sec::sim-study}.

We now give a brief overview of \texttt{JointExp} and \texttt{unbounded}. 
\texttt{JointExp} works as follows: Suppose we wish to estimate $m\in\bbN$ quantiles of levels $0<q_1<\ldots<q_m\leq 1$. 
For an integer $k\in\bbN$, let $[k]$ denote the set $\{1,\ldots,k\}$. 
Define the $j$th quantile generated by \texttt{JointExp} for a given $j\in[m]$ to be $\tilde\xi_{q_j}$, and let $\tilde\xi_{q}=(\tilde\xi_{q_1},\ldots,\tilde\xi_{q_m})$. 
For a given $\mu\in\cM_1(\re)$, let $F_\mu$ be the cumulative distribution function (CDF) of $\mu$ and, if it exists, let $f_\mu$ be its probability density function. 
For the dataset $\bf{X}_n$, let $\hat\nu$ be the corresponding empirical measure, so that $F_{\hat\nu}$ is then the associated empirical CDF. 
For $x\in\re^m$ with $x_1<\ldots<x_m$, define the following utility function:
$\phi(x)=-\sum_{j=1}^{m+1}| F_{\hat\nu}(x_j)-F_{\hat\nu}(x_{j-1}) -(q_j-q_{j-1})|,$
where $x_0=-\infty$, $x_{m+1}=\infty$, $q_0=0$ and $q_{m+1}=1$. 
Then $\tilde\xi_{q}$ is a random vector drawn from the density $f_{Q_{\mathbf{X}_n}}(x)\propto \exp(-n\phi(x)/2\epsilon)\ind{x\in[a,b]^m},$ where $Q_{\mathbf{X}_n}$ is used to denote the distribution of $\tilde\xi_{q}$, given $\mathbf{X}_n$.\footnote{Note $f_{Q_{\mathbf{X}_n}}$ is only defined for $x\in\re^m$ with $x_1<\ldots<x_m$.} 
It was shown by \citet{Gillenwater21a} that $\tilde\xi_{q}$ is $\epsilon$-differentially private. 
Furthermore, \citet{Lalanne2023a} showed that this algorithm is consistent if $\nu$ is continuous. 
If one suspects that $\nu$ contains atoms, then we recommend that one uses the jittering modification proposed by \citet{Lalanne2023a}. 
We show in Section~\ref{sec::main-results} that, for a large class of distributions, this algorithm has optimal sample complexity for estimating a single inner quantile, and therefore, for the scale and location of the boxplot. 
(Here we are only interested in estimating three quantiles, and so, how sample complexity scales in $m$ in not of particular concern.)

The original \texttt{unbounded} algorithm produces a private quantile given a lower bound on the data. 
We present a slightly modified version, which performs better for extreme quantiles. 
For a given quantile $q\in [1/2,1]$ and $a>0$, let $V_0,V_1,\ldots$ be independent, standard exponential random variables and let $\beta_n>1$. 
Define $i^*=\inf\{i\in \bbN\colon F_{\hat\nu}(\beta_n^i+a-1)+\frac{2}{n\epsilon}V_i\geq q+V_0\frac{2}{n\epsilon}\}$. 
Here $i^*$ is the smallest integer $i$ such that a noisy empirical CDF computed at $\beta_n^i+a-1$ is greater than a noisy version of $q$. 
The quantile estimate generated by the \texttt{unbounded} algorithm is given by: $\tilde \psi_{q}=\beta_n^{i^*}+a-1$. 
The intuition is that if a noisy version of $F_{\hat\nu}(\beta_n^{i^*}+a-1)$ is close to a noisy version of $q$, then $\beta_n^{i^*}+a-1$ should be close to the nonprivate quantile. 
If $q\in (0,1/2)$, then, we apply the above procedure to the dataset $-\mathbf{X}_n=\{-X_1,\ldots,-X_n\}$, with input parameters $1-q+1/n$ and $a=-b$. 
It follows from \citep{Durfee2023} that this algorithm is $\epsilon$-differentially private. 
In the proposed differentially private boxplot algorithm, we use \texttt{unbounded} to estimate the minimum and maximum of the dataset. 
For a measure $\mu\in\cM_1(\re)$, let $\xi_{p,\mu}=\inf\{x\in\re\colon p\leq F_\mu(x)\}$ be the associated $p$th quantile, $\min(\mu)=\inf\{x\colon F_\mu(x)>0\}$ and $\max(\mu)=\inf\{x\colon F_\nu(x)=1\}$. 
We show that if $\mathbf{X}_n$ is an independent sample from $\nu$, then $\tilde \psi_{1}$ is weakly consistent for $\max(\nu)$, and $\tilde \psi_{1/n}$ is weakly consistent for $\min(\nu)$, see Lemma~\ref{lem::unb-consistent}. 
By contrast, we show that when the support of $f_\nu$ is bounded below, \texttt{JointExp} is inconsistent for $\min(\nu)$ when $q=1/n$, unless $a=\min(\nu)$, see Lemma~\ref{lem::JE-inc}. 
Lastly, we note that instead of exponential noise, one could also use Laplace or Gumbel noise \citep{Durfee2023}.  
We found this to make little difference in simulation, and so we only present the version which incorporates exponential noise. 

\section{A differentially private boxplot}\label{sec::dpbp}


We can now introduce the differentially private boxplot. 
Given lower and upper bounds on the data $a$ and $b$, we first generate the minimum and maximum estimates using the \texttt{unbounded} algorithm, $\tilde\psi_{1/n}$ and $\tilde\psi_{1}$ with a privacy budget of $3\epsilon/16$ each. 
If one does not wish to supply input bounds for the data, one can use the ``fully unbounded'' version of \texttt{unbounded} \citep{Durfee2023}.  
However, in simulation, (see Section~\ref{sec::sim-study}), we have observed that the procedure is still accurate, even when the input bounds are very loose. 
Critically, we use the \texttt{unbounded} algorithm because \texttt{JointExp} (and, by consequence, \texttt{PrivateQuantile} and \texttt{ApproxQuantile}) is often inconsistent for estimating the aforementioned extreme quantiles, unless the input bounds $a$ and $b$ match $\min(\nu)$ and $\max(\nu)$, respectively, see Lemma~\ref{lem::JE-inc}. 
Next, we run one instance of \texttt{JointExp} with a total privacy budget of $\epsilon/2$ 
to generate $\tilde\xi_{1/4},\tilde\xi_{1/2},\tilde\xi_{3/4}$ simultaneously to be the lower box bound, center line and upper box bound, respectively. We then calculate the private interquartile range $\tilde \iqr=\tilde\xi_{3/4}-\tilde\xi_{1/4}$. 
Let $\tilde\ell_1=\tilde\xi_{1/4}-1.5\times\tilde \iqr$, $\tilde u_1=\tilde\xi_{3/4}+1.5\times\tilde \iqr$ and $\lambda_n>0$ be the buffer parameter. 
The lower and upper whisker are defined as
\begin{align*}
    \tilde\ell &= \begin{cases}
        \tilde\ell_1 & \tilde\xi_{1/n} \leq \lambda_n|\tilde\ell_1| + \tilde\ell_1 \\
        \tilde\psi_{1/n} & \tilde\xi_{1/n} > \lambda_n|\tilde\ell_1| + \tilde\ell_1,
    \end{cases}
\end{align*}
and
\begin{align*}
    \tilde u &= \begin{cases}
        \tilde u_1 & \tilde\xi_{1} \geq \tilde u_1 - \lambda_n|\tilde u_1| \\
        \tilde\psi_{1} & \tilde\xi_{1} < \tilde u_1 - \lambda_n|\tilde u_1|
    \end{cases},
\end{align*}
respectively. 
To elaborate, the role of the minimum and the maximum of the dataset in a traditional boxplot is played by the estimated extreme quantiles $\tilde\psi_{1/n}$ and $\tilde\psi_{1}$. 
Given that estimated extreme quantiles are more variable than the estimated inner quantiles, we add a buffer $\lambda_n$ to account for this. 
Specifically, we take the upper whisker equal to the extreme quantile $\tilde\psi_{1}$ when $\tilde\psi_{1}$ is $(\lambda_n\times $100)\% smaller than $1.5$ times the $\tilde \iqr$. An analogous procedure is done for the lower whisker. 
In simulation, we take $\lambda_n=n^{-1/4}$, see also Appendix \ref{app:sim-params}.

The boxplot includes the observations that lie above and below the upper and lower whiskers, respectively. 
We cannot release such data points under the constraints of differential privacy. 
Instead, we plot a noisy version of the number of points above $\tilde u$, denoted $\tilde o_u$ and below $\tilde \ell$, denoted  $\tilde o_\ell$. 
These are generated via the Laplace mechanism, where it is easy to see that the count of observations above or below a threshold has global sensitivity 1. 
Lastly, we attribute less privacy budget for computing the outliers, as we deem these values to be of less interest than the box itself. 
We assign each noisy outlyingness number a privacy budget of $\epsilon/16$. 
Together, these seven values make up the differentially private boxplot: $\tilde B(\mathbf{X}_n,\epsilon)=\tilde B(\hat\nu,\epsilon)=(\tilde  o_{\ell},\tilde\ell,\tilde\xi_{1/4},\tilde\xi_{1/2},\tilde\xi_{3/4},\tilde u,\tilde o_{u})$. 
The algorithm, which we call \texttt{DPBoxplot}, is summarized in Algorithm \ref{alg:difbp}, see also, Figure \ref{fig:dpboxplot}. 
It follows from sequential composition that \texttt{DPBoxplot} is $\epsilon$-differentially private. 

Note that the time and space complexities of \texttt{DPBoxplot} are given by the maximum of those of the differentially private quantile algorithms \texttt{unbounded} and \texttt{JointExp}. 
For example, an \texttt{unbounded} quantile can be generated in linear time \citep{Durfee2023}.




\RestyleAlgo{ruled}
\begin{algorithm}[hbt!]
\caption{\texttt{DPBoxplot}}\label{alg:difbp}
\SetKwInOut{Data}{Data}
\SetKwInOut{Input}{input}
\Input{$\ \mathbf{X}_n, \epsilon,a,b,\lambda_n$}
\tcp{Construct the private quantiles\footnotemark}
$\tilde\psi_{1}\gets$\texttt{unbounded}($1,a,b,3\epsilon/16$)\;
$\tilde\psi_{1/n}\gets$\texttt{unbounded}($1/n,a,b,3\epsilon/16$)\;
$\tilde\xi_q\gets$\texttt{JointExp}($q,a,b,\epsilon/2$)\;
\tcp{Ensure the box contains the median}
$\tilde\xi_{1/4} \gets \tilde\xi_{1/4}\wedge \tilde\xi_{1/2}$\; 
$\tilde\xi_{3/4} \gets \tilde\xi_{3/4}\vee \tilde\xi_{1/2}$\;
\tcp{Construct the whiskers and outlyingness counts}
$\tilde \ell \gets \tilde\xi_{1/4}-1.5(\tilde\xi_{3/4}-\tilde\xi_{1/4}) $\;
$\tilde u \gets \tilde\xi_{3/4}+1.5(\tilde\xi_{3/4}-\tilde\xi_{1/4})$\;
\eIf{ $\tilde\psi_{1/n} > \lambda_n|\tilde \ell|+\tilde \ell$}{
    $\tilde \ell \gets \tilde\psi_{1/n}$\;
    $\tilde o_\ell \gets 0$\;
}{
$\tilde o_\ell \gets \sum_{i=1}^n{\ind{x_i<\tilde \ell}} + \lap(0,\frac{1}{\epsilon/16})$\;
}
\eIf{  $\tilde\psi_{1} < \tilde u-\lambda_n|\tilde u|$}{
    $\tilde u \gets \tilde\psi_{1}$\;
    $\tilde o_u \gets 0$\;
}
{
    $\tilde o_u \gets \sum_{i=1}^n{\ind{x_i>\tilde u}} + \lap(0,\frac{1}{\epsilon/16})$\;
}
\Return $\tilde B(\mathbf{X}_n,\epsilon)=(\tilde  o_{\ell},\tilde\ell,\tilde\xi_{1/4},\tilde\xi_{1/2},\tilde\xi_{3/4},\tilde u,\tilde o_{u})$
\end{algorithm}
\footnotetext{In Algorithm \ref{alg:difbp}, for $a,b\in\re$, $a \vee b$ $(a\wedge b)$ denotes the maximum (minimum) of $a$ and $b$.}

\section{Theoretical results}\label{sec::main-results}
In this section, we derive several results concerning private quantiles and the different elements of the private boxplot. 
Recall that we assume that the dataset $\mathbf{X}_n$ consists of $n$ independent, identically distributed random variables drawn from a population measure $\nu\in\cM_1(\re)$. 
We first present a lemma which says that \texttt{JointExp} is inconsistent for extreme quantiles. 
We then present an upper bound on the sample complexity of the inner quantiles generated from \texttt{JointExp}. 
After which, we present a minimax lower bound for privately estimating a quantile from a population measure lying in a general set, which matches the upper bound up to logarithmic terms. 
Lastly, we present a result that says the whiskers and outlyingness numbers are consistent for their population counterparts.

We now define a boxplot rigorously, as a function of a measure on the set of real numbers, which is convenient mathematically. 
The non-private boxplot constructed from the data is taken to be the boxplot computed on the empirical measure of the dataset $\mathbf{X}_n$, denoted by $\hat\nu$. 
For $\nu\in\cM_1(\re)$, let  
$\iqr(\nu)=\xi_{3/4,\nu}-\xi_{1/4,\nu}$.  
Now, letting $\ell_{1,\nu}=\xi_{1/4,\nu}-1.5\times\iqr(\nu)$ and $u_{1,\nu}=\xi_{3/4,\nu}+1.5\times\iqr(\nu)$, the whiskers are defined as 
$\ell_\nu=\max(\ell_{1,\nu},\ \min(\nu))$ and $u_\nu=\min(u_{1,\nu},\ \max(\nu))$. 
Lastly, we can define $o_{\ell,\nu}=F_{\nu}(\ell_{\nu})-\nu(\{X=\ell_{\nu}\})$, $o_{u,\nu}=1-F_{\nu}(u_{\nu})$.
The ``population'' boxplot is the following seven number summary $B(\nu)=(o_{\ell,\nu},\ell_{\nu},\xi_{1/4,\nu},\xi_{1/2,\nu},\xi_{3/4,\nu},u_{\nu},o_{u,\nu})$. 


The next lemma says that \texttt{JointExp} (and, by consequence, \texttt{PrivateQuantile} and \texttt{ApproxQuantile}) is inconsistent for the minimum of $\nu$ when we set $q=O(1/n)$ if the input bounds do not exactly match those of the support of the population distribution, and the support of the population distribution is bounded. 
\begin{lemma}\label{lem::JE-inc}
For $-\infty<a<b<\infty$, $0<q\leq 1$, if there exists $M>a$ such that $\nu(\left\{X\leq M\right\})=0$, then for any $0<t< M-a$, we have that
$$\Prr{|\tilde\xi_{q}-\xi_{q,\nu}|\geq t}\geq e^{-\epsilon\frac{n q}{2}}\frac{M-t-a}{b-a}.$$
\end{lemma}
In this case, if $q=n^{-1}$, then the sample complexity is bounded below by infinity, which implies inconsistency. 
The proof of Lemma~\ref{lem::JE-inc} can be seen in Appendix~\ref{app::proofs}.

Before getting to our next results, we introduce a condition on the population distribution. 
For $L,r>0$ and $q\in(0,1]$, let $\cG_{L,r,q}$ be the set of absolutely continuous measures $\mu\in \cM_1(\re)$ such that $\inf_{\xi_{q}-r\leq x\leq \xi_{q}+r}f_\nu(x)\geq L>0$. 
Therefore, if the population measure $\nu\in \cG_{L,r,q}$, then it has a density which is bounded below by $L$ in a neighborhood of size $r$ around its $q$th quantile. 
\begin{condition}\label{cond::pop_m}
Given $p=(p_1,\ldots,p_m)$ with $0< p_1<\ldots <p_m\leq 1$ then $a\leq \xi_{p_1}<\xi_{p_m}\leq b$ and there exists $K,r,L>0$ such that $\nu\in \cap_{i=1}^m G_{L,r,p_i}$ and $\sup_{x\in [a,b]}f_\nu(x)\leq K$.
\end{condition}
This condition has three requirements. 
First, we require that the interval $[a,b]$ contains the population quantiles we wish to estimate. 
Our theoretical and simulation results show that this interval can be chosen loosely, as the error in our estimates does not depend strongly on the size of this interval. 
Thus, this is not a strict requirement. 
Next, we require that for each $p_i$, $i\in[m]$, in a neighborhood of size $r$ of the $p_i$th quantile of $\nu$, the density is bounded away from 0. This requirement is standard in quantile estimation, e.g, \citep{tzamos2020optimal, Lalanne23b}. 
Lastly, we require that the population density is bounded above everywhere. 
Requiring the density being bounded above is a weak restriction, which is satisfied by many distribution families, such as the Gaussian, Beta and Gamma families. 

We can now state an upper bound on the sample complexity of the private quantiles generated via \texttt{JointExp}, and by consequence, \texttt{PrivateQuantile} and \texttt{ApproxQuantile}.  
For $q=(q_1,\ldots,q_m)$ such that $0<q_1<\ldots<q_m\leq1$, define $\xi_{q,\nu}=(\xi_{q_1,\nu},\ldots, \xi_{q_m,\nu})$. 
Next, for $x,y\in\re$, we write $x \vee y$ $(x\wedge y)$ for the maximum (minimum) of $x$ and $y$.
\begin{theorem}\label{thm::JE_sc}
Given $-\infty<a<b<\infty$ and $q=(q_1,\ldots,q_m)$ such that $0<q_1<\ldots<q_m\leq1$, if Condition \ref{cond::pop_m} holds with $p=q$, then there exists a universal constant $C>0$ such that for all $0<\gamma<1$ and $0<t\leq r$, it holds that $\normm{\xi_{q,\tilde\nu}-\xi_{q,\nu}}\leq t$ with probability $1-\gamma$, provided that 
\begin{align*}
    n \geq C\Bigg(& m^{5/2}\frac{\log(1/\gamma)\vee m\log( m/ctL)}{t^2L^2} \\
    & \bigvee m^{2}\frac{\log\left(1/\gamma\right)\vee \log\left(\frac{K(b-a)}{q_1\wedge (1-q_m)\wedge tL}\right)}{tL\epsilon} \Bigg).
\end{align*}
\end{theorem}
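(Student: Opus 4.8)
The plan is to treat \texttt{JointExp} as an instance of the exponential mechanism and to split the target error $\normm{\tilde\xi_q - \xi_{q,\nu}}$ (where $\tilde\xi_q = \xi_{q,\tilde\nu}$ is the \texttt{JointExp} output) into a \emph{statistical} part, comparing the empirical quantiles $\xi_{q,\hat\nu}$ to the population quantiles $\xi_{q,\nu}$, and a \emph{mechanism} part, comparing the noisy output to the empirical quantiles. Writing the loss $L(x) = -\phi(x) = \sum_{j=1}^{m+1}\abs{F_{\hat\nu}(x_j) - F_{\hat\nu}(x_{j-1}) - (q_j - q_{j-1})} \ge 0$, the output has a Gibbs density proportional to $\exp(-cn\epsilon\, L(x))\ind{x\in[a,b]}$ on the ordered region $\{a \le x_1 < \cdots < x_m \le b\}$, for a universal $c>0$. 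I would first condition on the Dvoretzky--Kiefer--Wolfowitz (DKW) event $E_\delta = \{\sup_{x}\abs{F_{\hat\nu}(x) - F_\nu(x)} \le \delta\}$, which fails with probability at most $2e^{-2n\delta^2}$; on $E_\delta$ I can pass freely between $F_{\hat\nu}$ and $F_\nu$ at a cost of $\delta$ per threshold, so every bound below can be stated through $F_\nu$, $L$, and $K$.

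The crux is a matched pair of bounds on $L$. Setting $c_j = F_{\hat\nu}(x_j) - q_j$ with $c_0 = c_{m+1} = 0$, the summands telescope, $F_{\hat\nu}(x_j) - F_{\hat\nu}(x_{j-1}) - (q_j - q_{j-1}) = c_j - c_{j-1}$, so $L(x) = \sum_{j=1}^{m+1}\abs{c_j - c_{j-1}} \ge 2\max_k \abs{c_k}$ (bound $\abs{c_k}$ by the partial sums of increments from either end). For the lower bound, if $x$ lies in the bad set, so some coordinate $x_{j_0}$ is more than $t$ from $\xi_{q_{j_0},\nu}$, then using $f_\nu \ge L$ on $[\xi_{q_{j_0},\nu}-r,\xi_{q_{j_0},\nu}+r]$ and $t \le r$ gives $\abs{F_\nu(x_{j_0}) - q_{j_0}} \ge Lt$, hence on $E_\delta$ we get $\abs{c_{j_0}} \ge Lt - \delta$ and $L(x) \ge 2(Lt-\delta)$ uniformly over the bad set. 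For the upper bound, on the box $G_s = \{\normm{x - \xi_{q,\nu}}_\infty \le s\}$ (which lies in the ordered region and within $[a,b]$ by the assumptions $a\le\xi_{q_1,\nu}$ and $\xi_{q_m,\nu}\le b$), $f_\nu \le K$ yields $\abs{c_j} \le Ks + \delta$ for every $j$, so $L(x) \le 2(m+1)(Ks+\delta)$.

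Feeding these into the exponential-mechanism ratio yields, on $E_\delta$, $\Pr(\tilde\xi_q \in \text{bad set} \mid \mathbf{X}_n) \le \frac{\vol(\text{bad set})}{\vol(G_s)}\exp(-cn\epsilon[\,2(Lt-\delta) - 2(m+1)(Ks+\delta)\,])$, where $\vol(\text{bad set}) \le (b-a)^m/m!$ and, for $s$ below half the minimal gap between consecutive population quantiles, $\vol(G_s) \ge (2s)^m$. Choosing $\delta \asymp Lt/m$ and $s \asymp Lt/(mK)$ keeps the exponent's bracket $\gtrsim Lt$ while making the volume log term $m\log\frac{b-a}{2s} \asymp m\log\frac{mK(b-a)}{Lt}$; requiring this probability to be $\le \gamma/2$ then produces the privacy term $n \gtrsim \frac{1}{tL\epsilon}\big(\log(1/\gamma) \vee m\log\frac{K(b-a)}{\,\cdots\,}\big)$ (up to a power of $m$), and requiring $\Pr(E_\delta^c) \le \gamma/2$ with $\delta \asymp Lt/m$ produces the statistical term $n \gtrsim \frac{\log(1/\gamma)}{t^2L^2}$ (again up to a power of $m$). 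A union bound over the two events, followed by a conversion from the coordinatewise control to the stated norm $\normm{\cdot}$, finishes the argument.

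I expect the main obstacle to be bookkeeping the sharp polynomial-in-$m$ factors ($m^{5/2}$ and $m^2$ in the statement). The factor $m$ in the good-region bound $L(x) \le 2(m+1)(Ks+\delta)$ forces both $\delta$ and $s$ to scale like $1/m$, which couples the two error sources: the DKW tolerance $\delta \asymp Lt/m$ is exactly what inflates the statistical rate from $\frac{\log(1/\gamma)}{t^2L^2}$ to its $m$-dependent form, and that same $\delta$ must remain compatible with the mechanism bracket. The outermost buckets, whose target masses $q_1$ and $1-q_m$ limit how far the boundary CDF values can move, account for the $q_1\wedge(1-q_m)$ factor in the logarithm, and the $\ell_2$-versus-$\ell_\infty$ conversion in $\normm{\cdot}$ together with the $\log m!$ saving in the volume ratio must be tracked carefully. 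By contrast, the two structural inequalities ($L \ge 2\max_k\abs{c_k}$ and the density-based lower and upper bounds) are comparatively routine.
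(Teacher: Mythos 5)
Your proposal is correct and follows essentially the same route as the paper: both treat \texttt{JointExp} as an exponential mechanism and reduce everything to a lower bound on the utility gap over the bad set, obtained by the same telescoping argument (your $L(x)\geq 2\max_k\abs{c_k}$ is the two-sided version of the paper's partial-sum bound $\alpha(t)\geq \abs{q_{k_x}-F_\nu(x_{k_x})}\geq tL/\sqrt{m}$), combined with the density lower bound near the quantiles and the Lipschitz/upper bound $K$ for the good-region and boundary terms. The only real difference is presentational: the paper delegates the DKW-plus-volume-ratio concentration analysis to a packaged exponential-mechanism result (Corollary 7 of Ramsay et al., 2022, verified through a $(K,\sF)$-regularity condition), whereas you carry that analysis out by hand, deferring only the exact polynomial-in-$m$ factors, which the paper itself acknowledges are suboptimal and immaterial for $m=3$.
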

The proof of Theorem~\ref{thm::JE_sc} relies on a general bound for the exponential mechanism detailed by \citet{2022Ramsay}, it can be seen in Appendix~\ref{app::proofs}. 
Theorem~\ref{thm::JE_sc} gives an upper bound on the sample complexity of the quantiles generated by \texttt{JointExp} for distributions who satisfy Condition \ref{cond::pop_m} with $p$ being the vector of quantile levels to be estimated. 
Comparing to existing results, \citet{Lalanne23b} give a concentration result for \texttt{JointExp} which yields the same upper bound given by Theorem~\ref{thm::JE_sc}, when the support of $f_\nu$ is bounded. 
Therefore, our bound is essentially a generalization of theirs, though we assume that $f_\nu$ is bounded above. 
The sample complexity of the median of \citet{tzamos2020optimal}, which is a different estimator, also matches the upper bound given in Theorem~\ref{thm::JE_sc}, under similar assumptions. 
Lastly, we note that the upper bound given in Theorem~\ref{thm::JE_sc} has suboptimal scaling in $m$, the \texttt{ApproxQuantile} algorithm of \citet{Kaplan22} obtains logarithmic scaling in $m$. 
However, in this context, $m=3$, and this is not particularly relevant.

Next, we present a minimax lower bound for estimating a single quantile subject to differential privacy. 
Next, for a set $S$, let $\tilde\cM_{1,\epsilon,n}(S)$ be the set of maps from $\cD_n$ to $\cM_1(S)$ that satisfy \eqref{eqn::priv_cond}. 
Note that $\tilde\cM_{1,\epsilon,n}(S)$ is just a formal way of writing the set of all differentially private mechanisms whose output lies in $S$. 
Next, we write $T_\epsilon\sim \tilde\cM_{1,\epsilon,n}(\re)$ to denote the set of all univariate differentially private estimators.\footnote{To be clear, given an element of $\tilde\cM_{1,\epsilon,n}(\re)$, say $P_{.}$, $T_\epsilon$ could be the associated estimator. Given a dataset $\mathbf{X}_n$ with empirical measure $\hat\nu$, $T_\epsilon(\hat\nu)$ would be a draw from $P_{\mathbf{X}_n}$, or $T_\epsilon(\hat\nu)\sim P_{\mathbf{X}_n}$.}
For $0<q\leq 1$, let $h_q(x)=x\sqrt{(-\log x-[\Phi^{-1}\left(q\right)]^2/2)/\pi}$ and $C_q=\argmax_{x>0}h_q(x)$. 
Denote the minimax risk for differentially private quantile estimation by $\cR(\epsilon,L,r,q)=\inf_{T_\epsilon\sim \tilde\cM_{1,\epsilon,n}(\re)}\sup_{\nu\in \cG_{L,r,q}}\E{}|T_\epsilon(\hat\nu)-\xi_{q,\nu}|$.
\begin{theorem}\label{thm::minimax_lb}
For all $n\geq 1$, $q\in (0,1)$, $L,r>0$ such that $rL\leq \sup_{x>0}h_q(x)$, it holds that 
\begin{equation}\label{eqn::mmlb}
 n=\Omega\left(\frac{C_q^2}{L^2 t ^2}\vee \frac{C_q}{L t \epsilon}\right),
\end{equation}
samples are required for $\cR(\epsilon,L,r,q)\leq t$ to hold.
\end{theorem}
Theorem~\ref{thm::minimax_lb} gives a lower bound on the sample complexity for estimating the $q$th quantile from a distribution whose density is bounded below by $L$ in a neighborhood of size $r$ around the $q$th population quantile.  
What is important for this context, is that Theorem~\ref{thm::minimax_lb} implies a lower bound on the sample complexity for estimating $m$ quantiles from a distribution whose density is bounded below by $L$ in neighborhoods of size $r$ around each of the $m$ population quantiles.  
That is, for all $q=(q_1,\ldots,q_m)$, $n\geq 1$, $L,r>0$ with $rL\leq \inf_{j\in[m]}C_{q_j}$, it holds that \eqref{eqn::mmlb} samples are also required for $\inf_{T_\epsilon\sim \tilde\cM_{1,\epsilon,n}(\re^m)}\sup_{\nu\in \cG_{L,r,q}}\E{}\normm{T_\epsilon(\hat\nu)-\xi_{q,\nu}}\leq t$. 
Applying this lower bound in conjunction with Theorem~\ref{thm::JE_sc} yields that the scale and location of the proposed private boxplot are estimated optimally, up to logarithmic factors. 
Note that \citet{tzamos2020optimal} give a similar minimax lower bound on differentially private median estimation. 
However, their proof does not directly extend to estimating an arbitrary quantile, since it relies on a set of uniform distributions that have the property that the median coincides with the mean. 
The proof of Theorem~\ref{thm::minimax_lb} makes use of the differentially private version of Fano's inequality \citep{Acharya2021}, it is deferred to Appendix~\ref{app::proofs}. 

\begin{figure*}[!t]
\centering
\includegraphics[width=\textwidth]{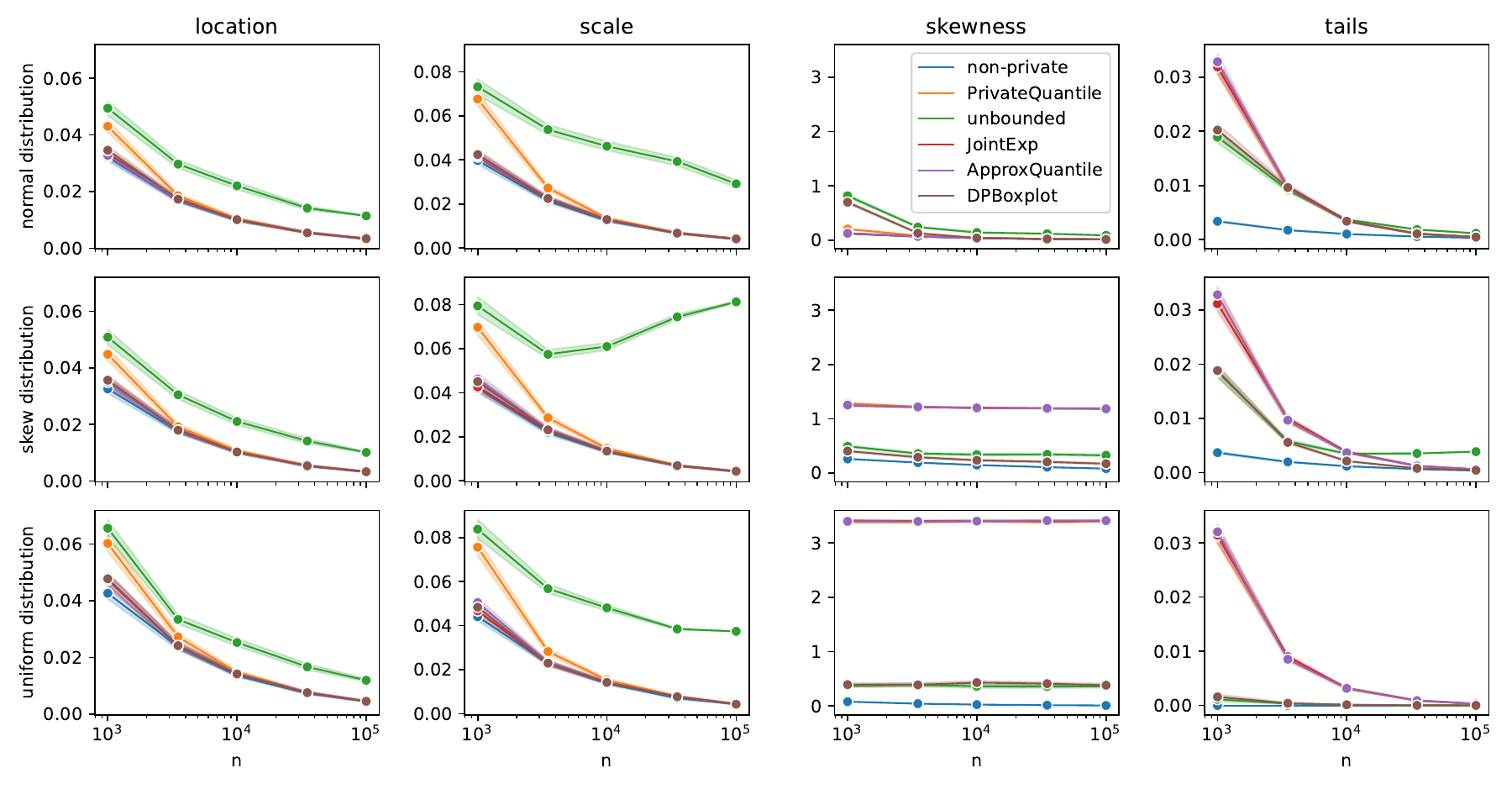}
\vspace{-0.2 in}
\caption{Average error in each metric with 95\% confidence intervals between the different differentially private boxplots and their corresponding population boxplot (y-axis) with $\epsilon = 1$ and increasing $n$ (x-axis). 
The \texttt{DPBoxplot} algorithm is presented, along with the naive method paired with each of the algorithms \texttt{JointExp}, \texttt{ApproxQuantile}, \texttt{unbounded} and \texttt{PrivateQuantile} (line color). 
The different generated distributions are represented row-wise, and the different metrics: location, scale, skewness and tails are presented column-wise. The \texttt{non-private} line is the distance between the non-private boxplot and its corresponding population boxplot.}
\label{fig:single}
\vskip -0.2in
\end{figure*}

Next, we show that the whiskers and the outlyingness numbers, when appropriately normalized, are weakly consistent for their population counterparts. 
%
\begin{theorem}\label{thm::whisker-out-consistent}
For $-\infty<a<b<\infty$, if $\lambda_n\to 0$ and $\beta_n\to 1$ as $n\to\infty$, and Condition~\ref{cond::pop_m} holds for $p=(1/4,1/2,3/4)$, then it holds that $\tilde\ell\conp \ell_\nu$, $\tilde u\conp u_\nu$ $\tilde o_{\ell}/n\conp o_{\ell,\nu}$ and $\tilde o_{u}/n\conp o_{u,\nu}$. 
\end{theorem}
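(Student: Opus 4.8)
The plan is to reduce the claim to the consistency of the individual quantile estimates that feed the boxplot, and then to push these through the continuous operations and the two data-driven case splits that define the whiskers and the outlyingness counts. First I would establish the building blocks. Under Condition~\ref{cond::pop_m}, Theorem~\ref{thm::JE_sc} applies to $q=(1/4,1/2,3/4)$ and gives $\tilde\xi_{1/4}\conp\xi_{1/4,\nu}$, $\tilde\xi_{1/2}\conp\xi_{1/2,\nu}$ and $\tilde\xi_{3/4}\conp\xi_{3/4,\nu}$, and since the population quartiles are strictly ordered the clamping steps $\tilde\xi_{1/4}\wedge\tilde\xi_{1/2}$ and $\tilde\xi_{3/4}\vee\tilde\xi_{1/2}$ do not alter these limits. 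Because $q_n\to0$ and $nq_n\to\infty$, Lemma~\ref{lem::unb-consistent} gives $\tilde\psi_{q_n}\conp\min(\nu)$ and $\tilde\psi_{1-q_n}\conp\max(\nu)$ for every $\beta>1$. The one subtlety is that \texttt{JointExp} is run with the random bounds $(\tilde\psi_{q_n},\tilde\psi_{1-q_n})$ rather than the fixed $(a,b)$; I would dispatch this by conditioning on the event that these bounds fall below $\xi_{1/4,\nu}$ and above $\xi_{3/4,\nu}$ with a fixed margin, which has probability tending to one since Condition~\ref{cond::pop_m} forces $\min(\nu)<\xi_{1/4,\nu}<\xi_{3/4,\nu}<\max(\nu)$, and by noting that the exponential-mechanism concentration underlying Theorem~\ref{thm::JE_sc} depends on the data only through a bound-independent event controlling the sharpness of $\phi$, while truncating to any sub-interval of $[a,b]$ that still contains the maximizer only improves concentration.

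With the building blocks in hand, the continuous mapping theorem gives $\tilde\iqr\conp\iqr(\nu)$, $\tilde\ell_1\conp\ell_{1,\nu}$ and $\tilde u_1\conp u_{1,\nu}$. For the lower whisker I would analyze the case split directly. Since $\lambda_n\to0$ and $\tilde\ell_1$ is bounded in probability, the threshold obeys $\lambda_n|\tilde\ell_1|+\tilde\ell_1\conp\ell_{1,\nu}$, so whenever $\min(\nu)\neq\ell_{1,\nu}$ the selection indicator $\ind{\tilde\psi_{q_n}>\lambda_n|\tilde\ell_1|+\tilde\ell_1}$ converges in probability to $\ind{\min(\nu)>\ell_{1,\nu}}$. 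On $\{\min(\nu)<\ell_{1,\nu}\}$ the algorithm selects $\tilde\ell=\tilde\ell_1\conp\ell_{1,\nu}=\ell_\nu$ with probability tending to one, on $\{\min(\nu)>\ell_{1,\nu}\}$ it selects $\tilde\ell=\tilde\psi_{q_n}\conp\min(\nu)=\ell_\nu$, and in the boundary case $\min(\nu)=\ell_{1,\nu}$ both candidates share the limit $\ell_\nu$, so $|\tilde\ell-\ell_\nu|\leq|\tilde\ell_1-\ell_{1,\nu}|\vee|\tilde\psi_{q_n}-\min(\nu)|\conp0$ no matter which branch is taken. This yields $\tilde\ell\conp\ell_\nu$, and applying the same argument to $-\mathbf{X}_n$ gives $\tilde u\conp u_\nu$.

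For the outlyingness counts I would combine the same branch analysis with the Glivenko--Cantelli theorem. When the inner fence is retained, $\tilde o_\ell/n=\hat F_n(\tilde\ell_1^-)+\lap(0,16/\epsilon)/n$; the Laplace term has fixed scale and so vanishes in probability after dividing by $n$, while $\hat F_n(\tilde\ell_1^-)\conp F_\nu(\ell_{1,\nu})=o_{\ell,\nu}$ by uniform convergence of the empirical distribution function together with $\tilde\ell_1\conp\ell_{1,\nu}$ and the continuity of $F_\nu$ (absolute continuity of $\nu$ removes the atom term in $o_{\ell,\nu}$). When the extreme quantile is used instead, $\tilde o_\ell=0$, which matches $o_{\ell,\nu}=F_\nu(\min(\nu))=0$, and the boundary case is again covered since both expressions share the limit. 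Writing $\tilde o_\ell/n$ as the indicator-weighted combination of the two branch values and invoking the selection-indicator convergence from the previous step gives $\tilde o_\ell/n\conp o_{\ell,\nu}$, and symmetrically $\tilde o_u/n\conp o_{u,\nu}$.

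I expect the main obstacle to be the rigorous handling of the random input bounds passed to \texttt{JointExp}: Theorem~\ref{thm::JE_sc} is stated for deterministic bounds, yet here the bounds are computed from the same sample, so the argument must cleanly separate the bound-independent data event that governs the sharpness of the utility $\phi$ from the truncation induced by the random interval. The boundary analysis of the whisker case split is the second delicate point, but it reduces to the elementary observation that at the transition $\min(\nu)=\ell_{1,\nu}$ the two candidate whisker values collapse to a common limit.
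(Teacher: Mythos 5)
Your proposal is correct and follows essentially the same route as the paper: consistency of the inner quantiles via Theorem~\ref{thm::JE_sc}, consistency of the extreme quantiles via Lemma~\ref{lem::unb-consistent}, the continuous mapping theorem together with $\lambda_n\to 0$ for the whiskers, and a Laplace-tail plus Dvoretzky--Kiefer--Wolfowitz argument for the normalized outlyingness counts. You are in fact more careful than the paper on two points it leaves implicit --- the random input bounds passed to \texttt{JointExp} and the explicit branch/boundary analysis of the whisker selection rule --- and your treatment of both is sound.
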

Here $\conp$ denotes convergence in probability. 
Theorem~\ref{thm::whisker-out-consistent} says if the population density is bounded below in neighborhoods of each of the population median and the population quartiles, the population density is bounded above everywhere, and $[a,b]$ contain the quartiles, then the whiskers and outlyingness numbers will be consistent. 
To our knowledge, this is the first result concerning extreme, private quantiles. 
Together, Theorems~\ref{thm::JE_sc} and \ref{thm::whisker-out-consistent} imply that, given a large enough sample size, \texttt{DPBoxplot} will correctly represent the location, scale, skewness and tail of the underlying distribution. 
Thus, making it statistically valid for one to use the private boxplot to describe samples in a differentially private exploratory data analysis. 

\section{Simulations}\label{sec::sim-study}

We now assess the empirical performance of \texttt{DPBoxplot}. 
We compare \texttt{DPBoxplot} to the ``naive private boxplot'' and the \texttt{non-private} sample boxplot. The naive private boxplot is one where a single differentially private quantile algorithm is used to generate all quantiles that are used in the boxplot. 
Using these quantiles, the whiskers and outlyingness counts are then constructed in the same manner as that of Algorithm \ref{alg:difbp}. 
We consider the following quantile estimation methods: \texttt{PrivateQuantile}, \texttt{unbounded}, \texttt{ApproxQuantile} and \texttt{JointExp}. 
For all algorithms, we set $a=-50$ and $b=50$ and $\lambda_n = n^{-1/4}$. 
We ran the same simulations with other values of $\lambda_n$  and found it did not alter the conclusions of the study, see Appendix \ref{app:sim-params}. 
For the \texttt{unbounded} algorithm, $\beta$ was set to the default value of 1.001 for both the naive boxplot and \texttt{DPBoxplot}.

We assess the boxplots across the four key metrics: scale, location, skewness and tails. 
We consider the error between each boxplot and the population boxplot. 
For \texttt{DPBoxplot}, errors in each of the components are quantified as follows: $|\tilde\xi_{1/2}-\xi_{1/2,\nu}|$ (location), $|\tilde \iqr-\iqr(\nu)|$ (scale), $|\tilde\ell-\ell_{\nu}|+|\tilde u-u_{\nu}|$ (skewness), $|\tilde o_\ell-n o_{\ell,\nu}|+|\tilde o_u-no_{u,\nu}|$ (tails). 
Errors for remaining boxplots are defined analogously.


Data were generated from the five distinct distributions parametrized to have mean 0 and variance 1: normal distribution (\texttt{normal}), skew normal distribution (\texttt{skew}), uniform distribution (\texttt{uniform}), beta distribution (\texttt{beta}), and an empirical distribution of 2019 NY airbnb listing prices (\texttt{airbnb})\footnote{These data were sampled without replacement.}. 
We considered values of $n$ and $\epsilon$ and each scenario was simulated 1000 times. 
(For more details, see Appendix \ref{app::single-sim}). Results from $\texttt{beta}$ and $\texttt{airbnb}$ distributions, and those with $\epsilon=0.5$ or $\epsilon=5$  did not add additional insights, so we also defer to Appendix \ref{app::single-sim}. 
We also assessed \texttt{DPBoxplot}'s ability to compare multiple distributions simultaneously, concluding that \texttt{DPBoxplot} also performs well under this setting. For brevity, these results are deferred to Appendix \ref{app::multiple-sim}. 

Figure~\ref{fig:single} provides a comprehensive summary of the simulation results. Each column of figures corresponds to a distinct boxplot component, while each row pertains to a different generating distribution. The sample size is depicted along the $x$-axis, while the average error is represented on the $y$-axis. The line color within the figures denotes the method used to generate the boxplot and the line style corresponds to the privacy budget $\epsilon$.

Consistent with Theorems \ref{thm::JE_sc}, \ref{thm::minimax_lb}, and \ref{thm::whisker-out-consistent},  the error for all components of \texttt{DPBoxplot} is converging to zero as $n$ increases in all scenarios. 
Further, we see that the error for \texttt{DPBoxplot} is similar to that of the error for \texttt{non-private}. 
This means that the sampling error is larger than that of the error attributed to privatization. 

On the other hand, the naive boxplots do not always exhibit the same behavior. 
In particular, the inconsistency of the whiskers for the naive methods based on \texttt{JointExp}, \texttt{ApproxQuantile}, and \texttt{PrivateQuantile} is reflected through poor performance in the skew metric for the \texttt{skew} and \texttt{uniform} distributions. 
On the other hand, though it does well at estimating the extreme quantiles, the \texttt{unbounded} algorithm worse than the other algorithms for estimating scale and location. 
We see that \texttt{DPBoxplot} retains the best of both worlds. 

In the setting of normally distributed data at small sample sizes, \texttt{DPBoxplot} does perform worse than the naive methods in the skewness metric. 
This happens because at small sample sizes, the \texttt{unbounded} algorithm tends to underestimate the magnitude of the minimum and maximum of the dataset, while the other algorithms overestimate the magnitude of the minimum and maximum. 
Nevertheless, we can still conclude that overall, \texttt{DPBoxplot} exhibits consistently better behavior than the naive methods.



\section{Case study}\label{sec::case-study}

\begin{figure*}[t!]
\centering
\includegraphics[width=\textwidth]{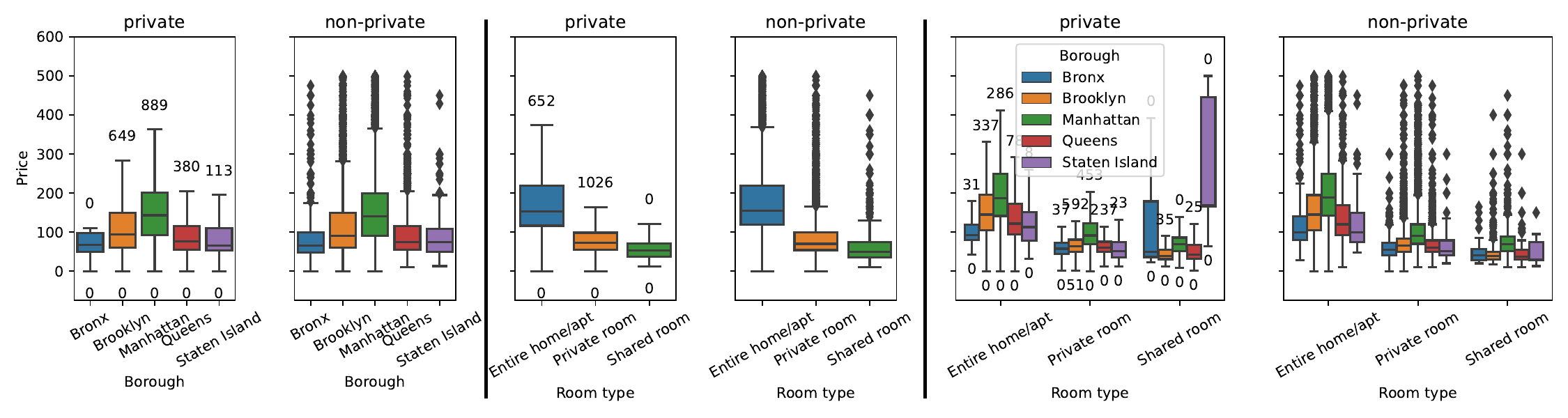}
\vspace{-0.2 in}
\caption{Private and non-private boxplots for inquiry 1. The boxplots display listing prices against borough, room type and room type separated by borough in columns 1, 2 and 3, respectively. Private boxplots are obtained with a total aggregated budget of $\epsilon=1$.}
\label{fig:real_data_borough}
\vskip -0.2in
\end{figure*}

We now conduct an exploratory data analysis via boxplots, within the framework of differential privacy. 
We consider two business inquiries, each with a privacy budget of $\epsilon = 1$. 
Each inquiry consists of several visualizations. 
The privacy budget for each visualization is proportional to the number of boxplots in the visualization.   
To elaborate, out of the total budget ($\epsilon = 1$), each visualization (and therefore, as a result of parallel composition, each boxplot) is assigned a privacy budget equal to the number of boxplots in the visualization, divided by the number of boxplots on all generated visualizations. 
We chose this allocation so that more budget is allotted to visualizations where the data is partitioned more times. 
The intuition is that if the data is partitioned more times, then each boxplot in the visualization has a sample size which is smaller.Code for replicating our visualizations has been made publicly accessible \cite{}.

We analyze a dataset containing Airbnb listing prices and associated metrics within New York City (NYC) in 2019 \citep{airbnb}. 
After removing listings priced above 500 US dollars (USD) and requiring minimum nights of stay fewer than 10, this dataset has $n=40738$ observations and $d=4$ explanatory variables of business interest. 
We only consider listings priced below 500 USD, and so we set $a=0$ and $b=500$. 
We address two distinct business inquiries, the first of which is presented. The second one, due to space requirements, is deferred to Appendix \ref{app::case2}. 

\textbf{Inquiry 1:  Do discernible patterns emerge in Airbnb listing prices across various boroughs in New York City and differing room types?}

The dataset encompasses five distinct boroughs within New York City, namely the Bronx, Brooklyn, Manhattan, Queens, and Staten Island, alongside three offered room types: Entire Home, Private Room, and Shared Room. We present three visualizations: boxplots of prices by borough, room type, and by borough and room type combinations. These generate 5, 3, and 15 boxplots, respectively, totaling 23. 

The differentially private boxplots are displayed in Figure \ref{fig:real_data_borough}, juxtaposed with their non-private counterparts. 
Only looking at the differentially private boxplots, Figure \ref{fig:real_data_borough} reveals that prices predominantly lie in the bottom end of the range [0,500], with a right skew and heavy right tail observed across all boroughs and room types, except for in the Bronx, and for shared rooms. 
The distribution of prices in the Bronx still exhibits a right skew, but has light tails. 
The distribution of prices for shared rooms appears to be symmetric, with light tails. 
Notably, prices appear elevated for Manhattan. 
As expected, entire homes are priced higher than shared spaces. 
The right-most plot, which displays prices by both borough and room type, affirms previous observations, except shared rooms in Staten Island and the Bronx. 
Staten Island and the Bronx exhibits higher variability for shared rooms. 

It is natural to compare the patterns observed on the differentially private boxplots to those observed in the non-private boxplots. 
As previously mentioned, Figure \ref{fig:real_data_borough} juxtaposes the differentially private boxplots with the non-private boxplots. 
Many conclusions derived from the private visualizations persist in the non-private domain.
However, there are several disparities, which are outlined are as follows: In reality, both the listing prices for shared rooms and properties in the Bronx do have a heavy, right tail. 
The second disparity occurs in the third plot, where the non-private plots does not indicate higher prices and variability for properties in Staten Island and the Bronx in shared rooms. 
These discrepancies can be explained by small sample sizes and our conservative choice of privacy budget. 
For instance, the number of shared room listings in Staten Island is only 9. Overall, the patterns observed are generally consistent between the private and non-private boxplots. 
The principal disparities are attributable to sample size, a factor we encourage practitioners to consider when conducting differentially private exploratory analyses. 

\section{Discussion}
We have demonstrated that differentially private boxplots are a theoretically and practically viable standalone tool for private data analysis. 
Our work concerns differentially private data visualization, which, despite its impact potential, is severely underexplored. 
Though some work has been done in this area, a deep investigation into the practical aspects of differentially private exploratory data analysis is still needed, and a promising direction of future research. 
In particular, a large barrier for private exploratory analysis is optimal budget allocation for sequential and iterative workflows. Addressing this challenge is crucial to facilitate real-world adoption of differential privacy in data analysis pipelines. 



\section*{Impact statement}
Our proposed methodology inherits the extensive implications, both beneficial and detrimental, of differential privacy. Differential privacy significantly enhances data confidentiality by introducing noise into datasets, which complicates the ability to link specific data points to individual identities. 
This greatly enhances the privacy awarded to individuals whose information is contained in such datasets. 
However, this method is not without its drawbacks. The primary challenge lies in the trade-off between privacy protection and data accuracy. The introduction of noise, while safeguarding privacy, may distort the data, potentially yielding inaccurate insights or conclusions. Such inaccuracies are especially problematic in contexts requiring high precision, such as policy development or clinical research, where exact data is crucial.


\bibliography{references}
\bibliographystyle{icml2025}

\newpage
\appendix
\onecolumn
\section{Case study inquiry 2}\label{app::case2}
\begin{figure*}[t!]
\centering
\includegraphics[width=0.85\textwidth]{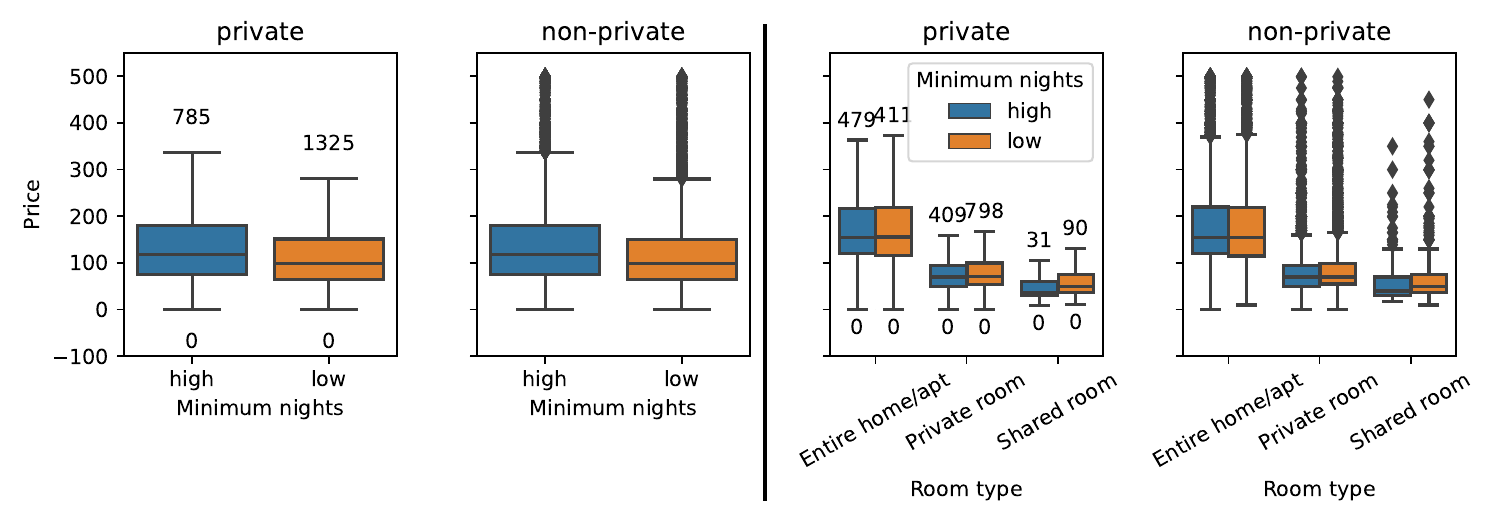}
\caption{Private and non-private boxplots for inquiry 2. The boxplots display listing prices against minimum nights and room type separated by minimum nights in columns 1 and 2, respectively. Private boxplots are obtained with a total aggregated budget of $\epsilon=1$.}
\label{fig:real_data_nights}
\end{figure*}

Here is the second inquiry from our case study. 
\textbf{Inquiry 2: Are there observable trends in Airbnb listing prices concerning minimum nights required for reservation and the types of rooms offered?}

We create a new categorical variable called minimum nights, in which a listing is assigned ``low'' if it has less than or equal to three minimum nights, and ``high'' otherwise. 
For this inquiry, we generate two visualizations: Boxplots of prices by minimum nights, and prices by combinations of minimum nights and room offered. These visualizations require 2, and 6 boxplots, respectively, totaling 8. 

The differentially private boxplots are displayed in Figure~\ref{fig:real_data_nights}, juxtaposed with their non-private counterparts. 
Again, we first analyze the patterns in the differentially private boxplots, and compare to those observed in the private boxplots afterward. 
The differentially private boxplots indicate the emergence of a phenomenon akin to Simpson's paradox. 
Specifically, a preliminary examination suggests that listings requiring a higher minimum number of nights are priced more steeply than their counterparts with lower minimum requirements. 
However, this trend disappears when the data is divided by room type. 
Listings for entire rooms exhibit no significant price differential based on minimum night requirement, while private rooms slightly favor lower minimum nights in terms of price. In shared rooms the median price does not seem to be significantly different. 

A comparative analysis with non-private boxplots reaffirms these observations; we observe relatively consistent patterns between the private and non-private boxplots. The primary visual disparity pertains to the positioning of the lower whiskers on the boxplots. 
This underscores the recognized challenge associated with differentially private estimation of extreme quantiles. 
However, this discrepancy does not materially impede the analytical value of our findings.

\section{A review of the boxplot}\label{app::boxplot}
The boxplot consists of a box with a line drawn through it, with two whiskers emanating from the lower and upper bounds of the box. 
Specific points are sometimes indicated above or below the whiskers. 
The central line inside the box marks the median of the dataset. 
The box itself is constructed from the quartiles, detailing the middle 50\% of $\mathbf{X}_n$.
The lower (upper) whisker is constructed from the larger (smaller) of the minimum (maximum) of $\mathbf{X}_n$ and the lower (upper) quartile of $\mathbf{X}_n$ minus (plus) 1.5 times the interquartile range of $\mathbf{X}_n$. 
Lastly, any points in $\mathbf{X}_n$ falling outside of the whiskers are added to the plot. 
See Figure \ref{fig:real_data_borough} for an example. 
The boxplot describes various characteristics of $\mathbf{X}_n$. 
The median line approximates the \emph{location} of $\mathbf{X}_n$. 
The box itself approximates the spread or \emph{scale} of $\mathbf{X}_n$. 
The \emph{skewness} of $\mathbf{X}_n$ can be approximated as follows: 
A median line placed in the center of the box, along with whiskers that are similar in magnitude, suggests a symmetric distribution. 
A median line placed away from the center of the box or asymmetric whiskers indicates skewness. 
Additionally, the \emph{tails} can be assessed by the number of points beyond the whiskers. 
Many such points signify outliers or heavy tails, highlighting data points that deviate significantly from the rest. 

\section{Technical proofs}\label{app::proofs}
Here for $a,b\in\re$, $a \vee b$ $(a\wedge b)$ denotes the maximum (minimum) of $a$ and $b$.

\begin{proof}[Proof of Lemma~\ref{lem::JE-inc}]
We have that
\begin{align*}
        \Prr{|\tilde\xi_{q_n}-\xi_{q_n,\nu}|\geq t}
    &\geq  \Prr{\tilde\xi_{q_n}\leq M-t}\geq \frac{\int_{a}^{M-t} e^{-\epsilon\frac{n |F_{\hat\nu}(x)-q_n|}{2}} dx}{\int_{a}^b e^{-\frac{n |F_{\hat\nu}(x)-q_n|}{2\epsilon}}dx}\geq e^{-\epsilon\frac{n q_n}{2}}\frac{M-t-a}{b-a}. \qedhere
\end{align*}
\end{proof}

Before presenting the proof of Theorem~\ref{thm::JE_sc}, we first prove a general sample complexity bound for quantiles. 
For $t\geq 0$, $q=(q_1,\ldots,q_m)$ such that $0<q_1<\ldots<q_m<1$, $-\infty<a<b<\infty$, and $\nu\in \cM_1(\re)$, define:
\begin{align*}
   \alpha(t)= 
\inf_{a<x_1<\ldots<x_m<b,\ \norm{x-\xi_{q,\nu}}\geq t}\sum_{i=1}^{m+1}|q_j-q_{j-1}-F_\nu(x_j)+F_\nu(x_{j-1})|.
\end{align*}
One may wish to recall that $\inf \emptyset=\infty$, and so $\alpha(t)$ is still defined if $\{a<x_1<\ldots<x_m<b,\ \norm{x-\xi_{q,\nu}}\geq t\}=\emptyset$. 
\begin{lemma}\label{lem::gen_q_bound}
For $-\infty<a<b<\infty$, if $\nu\in \cM_1(\re)$ is absolutely continuous such that $\sup_{x\in[a,b]}f_\nu(x)\leq K<\infty$, then there exists universal constants $C,c>0$ such that for all $q=(q_1,\ldots,q_m)$ such that $0<q_1<\ldots<q_m<1$ and $a\leq \xi_{q_1,\nu}\leq\ldots \xi_{q_m,\nu}\leq b$, all $t\geq 0$ and $0<\gamma<1$, it holds that 
$\normm{\xi_{q,\tilde\nu}-\xi_{q,\nu}}\leq t$, with probability $1-\gamma$, provided that
\begin{equation}\label{eqn::alpha-bound}
n\geq  Cm^2\frac{\log(1/\gamma)\vee m\log m/c\alpha(t)}{\alpha^2(t)}
 \bigvee\ m\frac{\log\left(1/\gamma\right)\vee \log\left(\frac{K(b-a)}{q_1\wedge (1-q_m)\wedge \alpha(t)}\right)}{\alpha(t)\epsilon}.
\end{equation}
\end{lemma}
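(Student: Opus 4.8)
The plan is to recognize \texttt{JointExp} as an instance of the exponential mechanism and then apply the general concentration bound for the exponential mechanism of \citet{2022Ramsay}. Writing $\phi_\mu$ for the utility functional with $F_\mu$ in place of $F_{\hat\nu}$, the output $\tilde\xi_q=\xi_{q,\tilde\nu}$ is drawn from the density proportional to $\exp\bigl(\lambda\,\phi_{\hat\nu}(x)\bigr)\ind{a<x_1<\cdots<x_m<b}$, where the effective inverse temperature satisfies $\lambda\asymp n\epsilon$. The relevant scaling follows from the global sensitivity of $\phi$: replacing one data point shifts $F_{\hat\nu}$ by at most $1/n$, which changes at most two of the $m+1$ bin masses $F_{\hat\nu}(x_j)-F_{\hat\nu}(x_{j-1})$ for fixed $x$, so $\phi$ has sensitivity $2/n$. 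First I would write the failure probability as a ratio,
\[
\Pr\bigl(\normm{\xi_{q,\tilde\nu}-\xi_{q,\nu}}>t\bigr)=\frac{\int_{B_t}\exp(\lambda\,\phi_{\hat\nu}(x))\,dx}{\int_{S}\exp(\lambda\,\phi_{\hat\nu}(x))\,dx},
\]
where $S=\{a<x_1<\cdots<x_m<b\}$ and $B_t=\{x\in S:\normm{x-\xi_{q,\nu}}\ge t\}$, and bound the numerator and denominator separately. This ratio structure is exactly what the bound of \citet{2022Ramsay} packages, so the work reduces to verifying its hypotheses in terms of the population margin $\alpha(t)$ and the density upper bound $K$.

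For the numerator, observe that $-\phi_\nu(x)\ge\alpha(t)$ on $B_t$ by definition of $\alpha(t)$, and that $|\phi_{\hat\nu}(x)-\phi_\nu(x)|\le 2(m+1)\normm{F_{\hat\nu}-F_\nu}_\infty$ uniformly in $x$, since each of the $m+1$ terms changes by at most twice the uniform deviation of the empirical CDF. By the DKW inequality, $\normm{F_{\hat\nu}-F_\nu}_\infty\le\alpha(t)/\bigl(4(m+1)\bigr)$ with probability at least $1-\gamma$ once $n\gtrsim(m+1)^2\log(1/\gamma)/\alpha^2(t)$; on this event $-\phi_{\hat\nu}(x)\ge\alpha(t)/2$ on $B_t$, so the numerator is at most $(b-a)^m\exp\bigl(-\lambda\alpha(t)/2\bigr)$. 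This is the source of the first, $\epsilon$-free term, and the extra $m\log(m/c\alpha(t))$ alternative inside the maximum arises from the covering argument needed to make the deviation control uniform over the $m$-dimensional feasible region. For the denominator I would restrict the integral to a small cube of half-width $\delta$ centred at $\xi_{q,\nu}$: using $f_\nu\le K$ gives $|F_\nu(x_j)-F_\nu(\xi_{q_j,\nu})|\le K\delta$, hence $-\phi_\nu\le 2(m+1)K\delta$ and, after absorbing the empirical fluctuation, $-\phi_{\hat\nu}\le\alpha(t)/4$ on the cube once $\delta\asymp\alpha(t)/(mK)$. The cube must also lie inside $S$, which forces $\delta$ to be below the inter-quantile gaps $\xi_{q_{j+1},\nu}-\xi_{q_j,\nu}\ge(q_{j+1}-q_j)/K$ and the boundary gaps governed by $q_1$ and $1-q_m$; this is where the factor $\log\bigl(K(b-a)/(q_1\wedge(1-q_m)\wedge\alpha(t))\bigr)$ enters. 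The denominator is then at least $\delta^m\exp\bigl(-\lambda\alpha(t)/4\bigr)$.

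Combining the two estimates gives
\[
\Pr\bigl(\normm{\xi_{q,\tilde\nu}-\xi_{q,\nu}}>t\bigr)\le\Bigl(\tfrac{b-a}{\delta}\Bigr)^m\exp\bigl(-\lambda\alpha(t)/4\bigr),
\]
which is at most $\gamma$ provided $\lambda\alpha(t)\gtrsim m\log\bigl((b-a)/\delta\bigr)+\log(1/\gamma)$; since $\lambda\asymp n\epsilon$ and $\log\bigl((b-a)/\delta\bigr)\asymp\log\bigl(K(b-a)/(q_1\wedge(1-q_m)\wedge\alpha(t))\bigr)$, this yields exactly the second term, and intersecting with the DKW event produces the full sample-complexity condition. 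I expect the main obstacle to be the denominator (normalizing-constant) direction: one must choose $\delta$ small enough that the centred cube respects the strict ordering $a<x_1<\cdots<x_m<b$, yet large enough that $\delta^m$ does not overwhelm the exponential gain, all while keeping the loss on the cube a controlled fraction of $\alpha(t)$ after the empirical perturbation. Tracking the precise $m$-dependence and logarithmic factors through this trade-off — rather than any single estimate — is the delicate part, and is precisely what the packaged bound of \citet{2022Ramsay} is designed to automate.
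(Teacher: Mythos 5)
Your proposal is correct and follows essentially the same route as the paper: the paper likewise views \texttt{JointExp} as an instance of the exponential mechanism, invokes the general sample-complexity bound of \citet{2022Ramsay} (Corollary 7) with discrepancy function equal to $\alpha(t)$, and converts the boundary distances $|\xi_{q_1,\nu}-a|\wedge|b-\xi_{q_m,\nu}|$ into $[q_1\wedge(1-q_m)]/K$ via the density bound and the mean value theorem. The only difference is that you unpack the numerator/denominator and DKW arguments that live inside the cited corollary, whereas the paper treats them as a black box after verifying the corollary's regularity hypotheses.
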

\begin{proof}[Proof of Lemma~\ref{lem::gen_q_bound}]
Given that $\tilde\xi_{q,\hat\nu}$ is a draw from the exponential mechanism, the result follows from an application of Corollary 7 of \cite{2022Ramsay}. 
Corollary 7 of \cite{2022Ramsay} gives an upper bound on the sample complexity of a draw from the exponential mechanism, provided the utility function meets certain criteria. 
In order to apply Corollary 7 of \cite{2022Ramsay}, we must show that the following function
\begin{multline*}
    \phi'(x,\nu)=-\sum_{j=1}^{m+1}| F_\nu(x_j)-F_\nu(x_{j-1}) -(q_j-q_{j-1})|\ind{[x_1,x_m]\in[a,b]}\\
    -\infty\cdot(1- \ind{[x_1,x_m]\in[a,b]}),
\end{multline*}
and $\nu$ satisfy three conditions. 
First, we must show that $\phi'(x,\nu)$ has a maximum, which is easily seen by definition, $\phi'(x,\nu)$ is maximized at $\xi_{q,\nu}$. 
The second requirement is that $\phi'(x,\nu)$ is $K$-Lipschitz, which follows by assumption. 
Lastly, Corollary 7 of \cite{2022Ramsay} requires that $\phi'(x,\nu)$ is a $(L,\cF)$ regular function for some $L>0$ with $\VC(\cF)<\infty$, see Definition \ref{def::kf-reg}.  
It is easy to see that $\phi'(x,\nu)$ is $(m,\cF)$ regular function with $\VC(\cF)=1$. 
In order to apply the bound given in Corollary 7 of \cite{2022Ramsay}, we must compute the discrepancy function of $\phi',\nu$, which is given by: 
$\alpha'(t)=\phi'(\xi_{q,\nu},\nu)-\sup_{\norm{x-\xi_{q,\nu}}\geq t}\phi'(x,\nu).$ 
A straightforward calculation yields that $\alpha'(t)=\alpha(t)$. 
Applying Corollary 7 of \cite{2022Ramsay}, yields that there exists a universal constant $C>0$ such that for any $t>0$ and $0\leq \gamma\leq 1$,  $\normm{\xi_{q,\tilde\nu}-\xi_{q,\nu}}\leq t$ with probability at least $1-\gamma$, if
\begin{equation*}
n\geq  C\left(m^2\frac{\log(1/\gamma)\vee m\log m/c\alpha(t)}{\alpha^2(t)}
 \bigvee\ m\frac{\log\left(1/\gamma\right)\vee \log\left(\frac{(b-a)}{|\xi_{q_1,\nu}-a|\wedge |b-\xi_{q_m,\nu}|\wedge \alpha(t)/K}\right)}{\alpha(t)\epsilon}\right).
\end{equation*}
Define $a\gtrsim b\ (a\lesssim b)$ if there is a universal constant $c>0$ such that $a\geq cb$ ($a\leq cb$).
Next, using the Lipschitz assumption and the mean value theorem, we have that 
$|\xi_{q_1,\nu}-a|\wedge |b-\xi_{q_m,\nu}|\gtrsim [q_1 \wedge (1-q_m)]/K,$
which yields the desired result. 
\end{proof}
We can now prove Theorem~\ref{thm::JE_sc}. 
\begin{proof}[Proof of Theorem~\ref{thm::JE_sc}]
The proof is based on an application of Lemma~\ref{lem::gen_q_bound}. 
All of the conditions of Lemma~\ref{lem::gen_q_bound} are satisfied by assumption, and so it remains to lower bound $\alpha$. 
To this end, let $k_x=\argmax_{i\in [m]}|x_i-\xi_{q_i,\nu}|$. 
Using this notation, for $0\leq t\leq r$, the definition of $\alpha$ in conjunction with the mean value theorem yields that
\begin{align*}
   \alpha(t)&= 
   \inf_{a<x_1<\ldots<x_m<b,\ \norm{x-\xi_{q,\nu}}\geq t}\sum_{j=1}^{m+1}|q_j-q_{j-1}-F_\nu(x_j)+F_\nu(x_{j-1})|\\
   &\geq \inf_{a<x_1<\ldots<x_m<b,\ \norm{x-\xi_{q,\nu}}\geq t}\sum_{j=1}^{k_x}|q_j-q_{j-1}-F_\nu(x_j)+F_\nu(x_{j-1})|\\
   &\geq \inf_{a<x_1<\ldots<x_m<b,\ \norm{x-\xi_{q,\nu}}\geq t}|\sum_{j=1}^{k_x}q_j-q_{j-1}-F_\nu(x_j)+F_\nu(x_{j-1})|\\
   &\geq \inf_{a<x_1<\ldots<x_m<b,\ \norm{x-\xi_{q,\nu}}\geq t}|q_{k_x}-F_\nu(x_{k_x})|\\
   &\geq tL/\sqrt{m}. 
\end{align*}
Applying this bound, in conjunction with Lemma~\ref{lem::gen_q_bound}, yields the desired result.
\end{proof}
Next, we prove Theorem~\ref{thm::minimax_lb}. 
\begin{proof}[Proof of Theorem~\ref{thm::minimax_lb}]
We apply Corollary 4 of \citep{Acharya2021}, of which a simpler version is restated below for clarity. 
For $\cP\subset \cM_1(\re)$, let $n^*=\inf\{n\colon \inf_{T_\epsilon\sim \tilde\cM_{1,\epsilon,n}(\re)}\sup_{\nu\in \cP}\int |T_\epsilon(\hat\nu)-T_\epsilon(\nu) |d\nu^n\leq \tau\}.$
\begin{corollary}[\cite{Acharya2021}]\label{lem::p_assouad}
For all $\epsilon,\tau>0$ and any $\cP\subset \cM_1(\re)$, let $\cQ=\{Q_1,\ldots,Q_m\}\subset \cP$. 
If for all $i\neq j$, it holds that $|T_\epsilon(Q_i)-T_\epsilon(Q_j) |\geq 3\tau$, $\KL(Q_i,Q_j)\leq \beta_n$, and $\TV(Q_i,Q_j)\leq \gamma$, then $n^*= \Omega\left(\log m(\beta_n^{-1}\vee (\gamma\epsilon)^{-1})\right)$. 
\end{corollary}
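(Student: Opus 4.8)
I propose to prove the bound in two independent pieces: a classical statistical piece that produces the $\beta^{-1}$ term and a privacy piece that produces the $(\gamma\epsilon)^{-1}$ term, each carrying the $\log m$ factor, and then to take the binding (larger) constraint. First I would reduce estimation to an $m$-ary identification problem. Suppose some $T_\epsilon\in\tilde\cM_{1,\epsilon,n}(\re)$ attains risk at most $\tau$ at every member of $\cQ$, i.e.\ $\int|T_\epsilon(\hat\nu)-T_\epsilon(Q_i)|\,dQ_i^n\le\tau$, where $T_\epsilon(\hat\nu)$ denotes (the law of) the released value. Writing $R_i$ for the law of the released value when the data are drawn from $Q_i^n$, and setting $B_i=\{\theta:|\theta-T_\epsilon(Q_i)|<3\tau/2\}$, the separation $|T_\epsilon(Q_i)-T_\epsilon(Q_j)|\ge 3\tau$ makes the balls $B_1,\ldots,B_m$ pairwise disjoint, while Markov's inequality gives $R_i(B_i)\ge 1/3$. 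Thus a low-risk private estimator yields a decoder that identifies the true index with constant probability, and the task becomes: show that no $\epsilon$-DP release can put mass $\ge 1/3$ on all $m$ disjoint balls simultaneously unless $n$ is large.

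For the statistical term I would run the ordinary Fano argument on the Markov chain $V\to X^n\to\theta$, with $V$ uniform on $[m]$. Data processing and convexity give $I(V;\theta)\le I(V;X^n)\le\max_{i\ne j}\KL(Q_i^n,Q_j^n)=n\max_{i\ne j}\KL(Q_i,Q_j)\le n\beta$ by tensorization of $\KL$. Fano's inequality then forces the identification error to exceed $1/2$ unless $n\beta+\log 2\gtrsim\log m$, that is, unless $n=\Omega(\log m/\beta)$. This step uses nothing about privacy and is routine.

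The crux is the privacy term, which I would obtain by a packing argument that converts $\TV$-closeness into a group-privacy blow-up. Fix $i\ne j$ and take the coordinatewise maximal coupling $\pi$ of $Q_i^n$ and $Q_j^n$; then the Hamming distance $d_H$ between the two coupled datasets is $\mathrm{Bin}(n,\TV(Q_i,Q_j))$, so $\mathbb{E}_\pi[d_H]\le n\gamma$ and, by Markov, $\Pr_\pi(d_H>4n\gamma)\le 1/4$. Group privacy applied to the $\epsilon$-DP release gives $T_\epsilon(\hat\nu_Y)(B)\ge e^{-d_H\epsilon}T_\epsilon(\hat\nu_X)(B)$ pointwise for the empirical measures $\hat\nu_X,\hat\nu_Y$ of the coupled datasets. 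Integrating against $\pi$, truncating to the event $\{d_H\le 4n\gamma\}$, and using $R_i(B_i)\ge 1/3$, I get $R_j(B_i)\ge e^{-4n\gamma\epsilon}\bigl(R_i(B_i)-\tfrac14\bigr)\ge e^{-4n\gamma\epsilon}/12$ for every $i\ne j$. Since the $B_i$ are disjoint, $1\ge\sum_{i}R_j(B_i)\ge (m-1)e^{-4n\gamma\epsilon}/12$, which forces $e^{4n\gamma\epsilon}\gtrsim m$, i.e.\ $n=\Omega(\log m/(\gamma\epsilon))$.

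Finally, since a risk-$\tau$ private estimator must satisfy both constraints at once, the binding requirement is their maximum, giving $n^*=\Omega\bigl(\log m\,(\beta^{-1}\vee(\gamma\epsilon)^{-1})\bigr)$. I expect the \emph{privacy} part to be the main obstacle: making the coupling rigorous when the datasets are random requires controlling the correlation between $d_H$ and the mechanism's output (handled above by the truncation-and-Markov device, which decouples the exponential factor $e^{-d_H\epsilon}$ from $T_\epsilon(\hat\nu_X)(B_i)$), and it requires that group privacy be legitimately applied to the release viewed as a function of the raw dataset, so that the per-coordinate $\epsilon$-DP guarantee is what gets amplified to $e^{-d_H\epsilon}$. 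The statistical Fano step and the reduction to identification are comparatively mechanical.
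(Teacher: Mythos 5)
Your proposal is correct, but there is an important contextual point: the paper does not prove this statement at all. It is stated as a simplified restatement of Corollary 4 of Acharya et al.\ (2021) and used as a black box inside the proof of Theorem~\ref{lem::s_c}. What you have done is reconstruct, from scratch, the argument behind that cited result, and your reconstruction is sound: the reduction from risk-$\tau$ estimation to $m$-ary identification via disjoint balls of radius $3\tau/2$ and Markov's inequality ($R_i(B_i)\geq 1/3$) is the standard first step; the classical Fano chain $V\to X^n\to\theta$ with $I(V;\theta)\leq n\max_{i\neq j}\KL(Q_i,Q_j)\leq n\beta$ delivers the $\Omega(\log m/\beta)$ term; and the privacy term is handled exactly as in the original source, by a coordinatewise maximal coupling with $\E{}{d_H}\leq n\gamma$, group privacy amplifying the per-record guarantee to $e^{-d_H\epsilon}$, and the truncation-plus-Markov device (valid because the released measure of any set is bounded by 1, so $T_\epsilon(\hat\nu_X)(B_i)\ind{d_H\leq 4n\gamma}\geq T_\epsilon(\hat\nu_X)(B_i)-\ind{d_H>4n\gamma}$) to decouple the Hamming distance from the mechanism's output, giving $R_j(B_i)\geq e^{-4n\gamma\epsilon}/12$ and hence $(m-1)e^{-4n\gamma\epsilon}/12\leq 1$, i.e.\ $n=\Omega(\log m/(\gamma\epsilon))$. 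The only cosmetic caveats are that the final constants require $m$ to exceed a universal constant (absorbed by the $\Omega(\cdot)$), and that the paper's dataset model is set-valued, so the coupled ordered tuples should be mapped to datasets with the observation that tuples differing in $d_H$ coordinates yield datasets reachable by at most $d_H$ single-point swaps; neither affects correctness. In short: where the paper cites, you prove, and your proof is essentially the proof of the cited result.
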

In order to apply Corollary~\ref{lem::p_assouad}, we first define a class of Gaussian measures which lie in $\cG_{L,r,q}$. 
Recall that we consider $\nu$ with densities which satisfy: $\inf_{x\in \xi_{q,\nu}\pm r}f_{\nu}(x)\geq L$. 
Let $P_{\mu,\sigma}=\cN(\mu,\sigma^2)$ and let $h_{\mu,\sigma}$ denote the associated quantile function. 
We have that for $q\leq 1/2$, it holds that
\begin{align*}
\inf_{x\in h_{\mu,\sigma}(q)\pm r}f_{P_{\mu,\sigma}}(x)&\geq (2\pi\sigma^2)^{-1/2}\exp\left(-[\text{erf}^{-1}(2q-1)-r/\sqrt{2}\sigma]^2\right)\\
&\geq  (2\pi\sigma^2)^{-1/2}\exp\left(-r^2/2\sigma^2-[\text{erf}^{-1}(2q-1)]^2\right).
\end{align*}
Taking $\sigma =  C_q/\sqrt{2\pi}L$ gives that 
\begin{align*}
\inf_{x\in h_{\mu,\sigma}(q)\pm r}f_{P_{\mu,\sigma}}(x)&\geq   L/ C_q\exp\left(-[\text{erf}^{-1}(2q-1)]^2-\pi r^2L^2/ C_q^2\right).
\end{align*}
Now, note that 
\begin{align*}
    \text{erf}^{-1}(2q-1)&=\Phi^{-1}\left(q\right)/\sqrt{2},
\end{align*}
which in turn implies that
$$\inf_{x\in h_{\mu,\sigma}(q)\pm r}f(x)\geq  L\exp\left(-[\Phi^{-1}\left(q\right)]^2/2-r^2L^2\right)/\sqrt{\pi}.$$
Next, we have that 
\begin{align*}
    L/ C_q\exp\left(-[\Phi^{-1}\left(q\right)]^2/2-\pi  r^2L^2/ C_q^2\right)\geq L\\
    &\iff      \exp\left(-[\Phi^{-1}\left(q\right)]^2/2-\pi r^2L^2/ C_q^2\right)\geq  C_q\\
        &\iff      [\Phi^{-1}\left(q\right)]^2/2+\pi r^2L^2/ C_q^2 \leq -\log  C_q\\
    &\iff      rL\leq  C_q\sqrt{(-\log  C_q-[\Phi^{-1}\left(q\right)]^2/2)/\pi}.
\end{align*}
We must then have that $rL\leq  C_q\sqrt{(-\log  C_q -[\Phi^{-1}(q)]^2/2)/\pi}$, which holds by assumption.  
Define $\cF_{L,r}=\left\{\cN(\mu,C_q^2/2\pi L^2 )\colon \mu\in\re \right\}$. 
We find the values of $\beta_n,\gamma$ and $\tau$ for $\cF_{L,r}$. 
Consider the values $\mu_0,\mu_1$, and for $i$=1 or $i=0$, denote $Q_i=P_{\mu_i,C_q^2/2\pi L^2}$. 
Note that for any quantile $q$, we have that $|\xi_{q,Q_0}-\xi_{q,Q_1}|=|\mu_0-\mu_1|$. 
That is, the distance between the quantiles is just the difference between the means. 
For some $\tau>0$, take any $\mu_0,\mu_1$ such that $|\mu_0-\mu_1|\geq 3\tau$. 
It follows that $\KL(Q_0,Q_1)\lesssim \tau^2L^2/C_q^2$ and $\TV(Q_0,Q_1)=2\Phi(c\tau L/C_q)-1\lesssim L\tau/C_q $. 
Therefore, applying Corollary~\ref{lem::p_assouad} gives that 
\begin{equation*}
    n^*=\Omega\left(\frac{C_q^2}{L^2 \tau ^2}\vee \frac{C_q}{L \tau \epsilon}\right). \qedhere
\end{equation*}
\end{proof}
Before proceeding with the proof of Theorem~\ref{thm::whisker-out-consistent}, we first prove that the \texttt{unbounded} algorithm estimates extreme quantiles consistently. 
\begin{lemma}\label{lem::unb-consistent}
For all non-increasing sequences $\beta_n\to 1$ and $q_n\rightarrow 0$, and all $\nu\in\cM(\re)$, it holds that
\begin{enumerate}[i]
    \item if $\min(\nu)<b$, then $\tilde \psi_{q_n}\conp \min(\nu)$. 
    \item if $\max(\nu)>a$, then $\tilde \psi_{1-q_n}\conp \max(\nu)$. 
\end{enumerate} 
\end{lemma}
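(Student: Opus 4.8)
The plan is to reduce part~(i) to part~(ii) and then prove~(ii) by a two-sided concentration argument, treating the cases $\max(\nu)<\infty$ and $\max(\nu)=\infty$ separately. For the reduction, recall that for $q_n<1/2$ the algorithm is run on $-\mathbf{X}_n$ at level $1-q_n$ with lower bound $-b$ (with the output negated); since $\max(-\nu)=-\min(\nu)$ and $\min(\nu)<b\iff \max(-\nu)>-b$, part~(i) for $\nu$ is exactly part~(ii) applied to $-\nu$. So I focus on~(ii). Write $M=\max(\nu)$, let $g_i=\beta^i-a-1$ denote the (fixed) grid, and recall $\tilde\psi_{1-q_n}=g_{i^*}$ where $i^*$ is the first index with $F_{\hat\nu}(g_i)+\tfrac{2}{n\epsilon}V_i\ge (1-q_n)+\tfrac{2}{n\epsilon}V_0$.

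First I would establish the \emph{lower} bound: the algorithm does not stop far below $M$. Fix $y<M$ and set $p=F_\nu(y)<1$. Only finitely many grid points satisfy $g_i\le y$ (for fixed $\beta$), and each has $F_\nu(g_i)\le p$; by the DKW inequality $\max_{g_i\le y}F_{\hat\nu}(g_i)\le p+o_P(1)$. Stopping at such a $g_i$ would require $\tfrac{2}{n\epsilon}V_i\ge (1-q_n)-F_{\hat\nu}(g_i)-\tfrac{2}{n\epsilon}V_0$, whose right-hand side is at least $(1-p)/2>0$ eventually because $q_n\to 0$ and $V_0\ge 0$; this forces $V_i\gtrsim n\epsilon(1-p)$. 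A union bound over the fixed number of grid points below $y$, using the exponential tail of the $V_i$, shows the probability of any early stop tends to $0$, so $\tilde\psi_{1-q_n}>y$ with probability tending to one. When $M=\infty$ this already finishes the proof: taking $y=C$ arbitrary gives $\tilde\psi_{1-q_n}\conp\infty$.

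For $M<\infty$ I would then prove the \emph{upper} bound. Once a grid point exceeds the sample maximum — in particular once $g_i>M$, since every $X_j\le M$ — we have $F_{\hat\nu}(g_i)=1$ almost surely, so at the first such grid point $g^+$ the stopping rule reduces to $V_0-V_{i^+}\le \tfrac{n\epsilon q_n}{2}$. As $V_0-V_{i^+}=O_P(1)$ and $nq_n\to\infty$, this holds with probability tending to one, so the algorithm has stopped by $g^+$ and $\tilde\psi_{1-q_n}\le g^+$. Combined with the lower bound (with $y=M-\delta$), this confines $\tilde\psi_{1-q_n}$ to $(M-\delta,g^+]$ with high probability. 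Note the complementary roles of the two hypotheses on $q_n$: $q_n\to0$ opens the gap that prevents premature stopping, while $nq_n\to\infty$ supplies the slack that guarantees stopping once $F_{\hat\nu}$ saturates at $1$.

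The step I expect to be the main obstacle is closing the upper bound to $M+\delta$ in the bounded-support case. For fixed $\beta$ the first grid point above $M$ satisfies $g^+-M\le(\beta-1)(M+a+1)$, so $g^+$ need not lie within $\delta$ of $M$, and the argument above only brackets $\tilde\psi_{1-q_n}$ between $M-\delta$ and $g^+$. Reconciling this with the claimed exact consistency requires controlling the granularity of the geometric grid near $M$ — the spacing there is of order $(\beta-1)$ times the local scale — which is where the delicate work lies; the unbounded case $M=\infty$, by contrast, needs no upper bound and follows immediately from the lower-bound argument. A secondary technical point is the uniform control of $F_{\hat\nu}$ over the relevant grid points, but since for fixed $\beta$ only finitely many grid points lie below any fixed level, this reduces to a finite union bound rather than a full empirical-process argument.
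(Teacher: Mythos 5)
Your overall strategy mirrors the paper's: reduce (i) to (ii) by negation, use DKW together with the exponential tails of the $V_i$ to rule out stopping at grid points where $F_\nu$ is bounded away from $1$, and use $nq_n\to\infty$ to force stopping once $F_{\hat\nu}$ saturates at $1$. The lower-bound half of your argument is sound and is essentially the paper's treatment of the event $\{k\le R_{t,1}\}$ (the paper bounds the product of non-stopping probabilities by $1-R_{t,1}\exp(-n\epsilon(1-q_n-c_n-F_\nu(\beta^{R_{t,1}}-1)))$, which is your finite union bound).

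The obstacle you flag at the end is, however, a genuine gap, and it is not one you could have closed by working harder: for fixed $\beta>1$ the output is always a point of the geometric grid $\{\beta^i-a-1\}$, whose spacing near $M=\max(\nu)$ is of order $(\beta-1)(M+a+1)>0$. Your own two bounds show that, with probability tending to one, the algorithm skips every grid point at which $F_\nu<1$ and stops at the first grid point $g^+\ge M$; generically $g^+>M$ strictly, so the argument yields $\tilde\psi_{1-q_n}\conp g^+$ rather than $M$. Concretely, for $\nu=\mathrm{Unif}[0,1]$ with $a=1$ and $\beta=2$ the grid is $\{0,2,6,\dots\}$ and the output converges to $2$. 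The paper's proof does not repair this: in bounding $\Prr{k\ge R_{t,2}}$ it isolates the single factor at the (non-integer) index $R_{t,2}=\log(\max(\nu)+t+1)/\log\beta$ and needs $\ind{1-q_n/2+c_n-F_\nu(\beta^{R_{t,2}}-1)\ge 0}$ to vanish; but the nearest actual grid point may lie below $\max(\nu)$ (so $F_\nu<1$ there and the indicator is $1$), and even when a grid point does land in $[\max(\nu),\max(\nu)+t)$ the condition reduces to $c_n<q_n/2$ with $c_n=1/\log n$, which fails for the $q_n=c/\sqrt n$ used in \texttt{DPBoxplot}. So your diagnosis is exactly right: with $\beta$ fixed the conclusion can only hold up to the grid resolution (a multiplicative factor of $\beta$), and exact consistency would require $\beta=\beta_n\downarrow 1$ or an assumption placing $\max(\nu)$ on the grid. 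Your write-up is honest about precisely where the argument stops, and that is the correct place.
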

\begin{proof}
First, without loss of generality, take $a=0$. 
Next, using the fact that $V_0$ is a standard exponential random variable. 
\begin{equation}\label{eqn::exp-rv}
    \Prr{2V_0/n\epsilon>1-q_n}= e^{- n (1-q_n)\epsilon/2}.
\end{equation}
In addition, letting $c_n=1/\log n$, the Dvoretzky–Kiefer–Wolfowitz inequality yields that 
\begin{equation}\label{eqn::DKW}
\Prr{\sup_{x\in\re}|F_{\hat\nu}(x)-F_{\nu}(x)|>c_n}\leq 2e^{-2n/(\log n)^2}.
\end{equation} 
It suffices to show that for all $t>0$, it holds that $\Prr{|\tilde \psi_{1-q_n}-\max(\nu)|>t}\rightarrow 0$ as $n\to\infty$. (A symmetric argument then implies that also, $\Prr{|\tilde \psi_{q_n}-\min(\nu)|>t}\rightarrow 0$ as $n\to\infty$.)
Letting $k$ be the integer such that $\tilde \psi_{1-q_n}=\beta_n^k-1$, we have that 
\begin{align*}
\Prr{|\tilde \psi_{1-q_n}-\max(\nu)|> t}&= \Prr{\max(\nu)-\beta_n^k-1> t}+ \Prr{\beta_n^k-1-\max(\nu)> t},
\end{align*}
for which we can write
\begin{align*}
    \Prr{\max(\nu)-\beta_n^k-1> t}&= \Prr{k< \frac{\log\left(\max(\nu)-t-1\right)}{\log \beta_n}}\coloneqq \Prr{k< R_{t,1}},\\
     \Prr{\beta_n^k-1-\max(\nu)> t}&= \Prr{k> \frac{\log\left(\max(\nu)+t+1\right)}{\log \beta_n}}\coloneqq \Prr{k>R_{t,2}}.
\end{align*}
It suffices to show that $\Prr{k< R_{t,1}},\Prr{k> R_{t,2}}\to 0$ as $n\to\infty$. 
To this end, first, note that if $$\frac{\log\left(\max(\nu)+t+1\right)}{\log \beta_n}-\frac{\log\left(\max(\nu)-t-1\right)}{\log \beta_n}<1,$$
then $\Prr{\{k< R_{t,1}\}\cup \{k> R_{t,2}\}}=1$. 
However, we have that $\beta_n$ is decreasing to 1. 
This immediately implies that for all $t>0$, there exists $n_0>1$ such that for all $n>n_0$, we have that 
$$\frac{\log\left(\max(\nu)+t+1\right)}{\log \beta_n}-\frac{\log\left(\max(\nu)-t-1\right)}{\log \beta_n}>1.$$ 
Next, utilizing \eqref{eqn::DKW}, the fact that $V_0>0$ and the definition of $V_i$, we have that
\begin{align}
    \Prr{k< R_{t,1}}&=1-\prod_{i=1}^{R_{t,1}  }\Prr{F_{\hat\nu}(\beta_n^i-1)+2V_i/n\epsilon<1-q_n+2V_0/n\epsilon}\nonumber\\
    &\leq 1-\prod_{i=1}^{R_{t,1}}  \Prr{F_{\nu}(\beta_n^i-1)+2V_i/n\epsilon<1-q_n-c_n}+2e^{-2n/(\log n)^2}\nonumber\\
    &\leq 1-\prod_{i=1}^{R_{t,1}} \left( 1-\exp\left(-n\epsilon(1-q_n-c_n-F_{\nu}(\beta_n^i-1))/2\right) \right)+2e^{-2n/(\log n)^2}\nonumber\\
    \label{eqn::first-bound-unb}
    &\coloneqq 1-I+2e^{-2n/(\log n)^2}.
\end{align}
Now, using the fact that $(1+x/n)^n\geq 1+x$ for all $n\geq 1$ and $|x|\leq n$, we have that
\begin{align*}
 I&=\prod_{i=1}^{R_{t,1}} \left( 1-\exp\left(-n\epsilon(1-q_n-c_n-F_{\nu}(\beta_n^i-1))/2\right) \right)\\
 &\geq  \left( 1-\exp\left(-n\epsilon(1-q_n-c_n-F_{\nu}(\beta_n^{R_{t,1}}-1))/2\right) \right)^{R_{t,1}}\\
 &\geq 1-R_{t,1}\exp\left(-n\epsilon(1-q_n-c_n-F_{\nu}(\beta_n^{R_{t,1}}-1))/2\right).
\end{align*}
Now, applying the preceding inequality in conjunction with \eqref{eqn::first-bound-unb} yields that
\begin{align*}
     \Prr{k< R_{t,1}}
     &\leq  R_{t,1}\exp\left(-n\epsilon(1-q_n-c_n-F_{\nu}(\beta_n^{R_{t,1}}-1))/2\right)+2e^{-2n/(\log n)^2}\to 0,
\end{align*}
as $n\to \infty$. 
On the other hand, for the term $ \Prr{k> R_{t,2}}$, utilizing \eqref{eqn::exp-rv} and \eqref{eqn::DKW},  we have that
\begin{align*}
 \Prr{k> R_{t,2}}&\leq \prod_{i=1}^{R_{t,2}  }\Prr{F_{\nu}(\beta_n^i-1)+2V_i/n\epsilon<1-q_n/2+c_n}+2e^{-2n/(\log n)^2}+e^{-n\epsilon(1-q_n)/2}\\
 &\leq \Prr{F_{\nu}(\beta_n^{R_{t,2}}-1)+2V_{R_{t,2}}/n\epsilon<1-q_n/2+c_n}+2e^{-2n/(\log n)^2}+e^{-n\epsilon(1-q_n)/2}\\
 &\leq \ind{1-q_n/2+c_n-F_{\nu}(\beta_n^{R_{t,2}}-1)\geq 0}+2e^{-2n/(\log n)^2}+e^{-n\epsilon(1-q_n)/2}\to 0,
\end{align*}
as $n\to \infty$. 
\end{proof}
We can now prove Theorem~\ref{thm::whisker-out-consistent}.
\begin{proof}[Proof of Theorem~\ref{thm::whisker-out-consistent}]
We first prove that the whiskers are consistent. 
Theorem~\ref{thm::JE_sc} implies that $\tilde\ell_{1}\conp \ell_{1,\nu}$ and Lemma~\ref{lem::unb-consistent} implies that $\tilde\ell_{2}\conp \min(\nu)$. 
Continuous mapping theorem and the fact that $\lambda_n\to 0$ as $n\to\infty$ yields that $\tilde\ell\conp \ell_\nu$. 
The same argument applies to the upper whisker. 

For the outlyingness number, first, we have that
$|o_{\ell,\nu}/n-\tilde o_{\ell}|\leq |o_{\ell,\hat\nu}-\tilde o_{\ell}/n|+|o_{\ell,\nu}- o_{\ell,\hat\nu}|\coloneqq I+II$. 
Next, the properties of the Laplace distribution give that $\Prr{|o_{\ell,\hat\nu}-\tilde o_{\ell}/n|\geq t} \lesssim e^{-n\epsilon t}$. Therefore $I\conp 0$. 
Next, using the fact that $\sup_{x\in\re}f_\nu(x)\leq K$, we have that $F_{\nu}$ is $K$-Lipschitz. 
It follows that $|o_{\ell,\nu}/n- o_{\ell,\hat\nu}/n|\leq K|\ell_{\hat\nu}-\ell_{\nu}|+\sup_{x\in\re}|F_{\nu}(x)-F_{\hat\nu}(x)|$. 
Next, the Dvoretzky–Kiefer–Wolfowitz inequality yields that $\sup_{x\in\re}|F_{\nu}(x)-F_{\hat\nu}(x)|\conp 0$ and the fact that the whiskers are consistent implies that $K|\ell_{\hat\nu}-\ell_{\nu}|\conp 0$. 
The same argument can be made for the upper outlyingness number. 
\end{proof}
\section{Simulation details and additional results}\label{app::add-sim-results}

\subsection{Single boxplot estimation} \label{app::single-sim}

This section presents more details on the simulation study described in Section \ref{sec::sim-study}. 
We simulated data from five distributions, each with mean 0 and variance 1: standard normal distribution (\texttt{normal}), skew normal distribution with scale parameter 20 (\texttt{skew}), uniform distribution with interval $[-\sqrt{3},\sqrt{3}]$ (\texttt{uniform}), a normalized and mean-centered version of the beta distribution with $\alpha = \beta_n = 2$ (\texttt{beta}), and normalized and mean-centered empirical distribution generated from 2019 NY airbnb listing prices. We considered sample sizes of $n \in \{1000, 3500, 10000, 35000, 100000\}$ and privacy budgets of $\epsilon\in\{0.5, 1,5,10\}$. 
Each scenario was simulated 1000 times, leading to simulated data vectors $\mathbf{X}_n$.  
For each generated dataset, we computed both the non-private boxplot $B(\mathbf{X}_n)=B(\hat\nu)$ and the population boxplot. 
We also calculated differentially private boxplots $\tilde B(\mathbf{X}_n,\epsilon)$ using each of the methods. 
Figure \ref{fig:app_epsilon05} and Figure \ref{fig:app_epsilon5}  are analogous to Figure 1 in the manuscript, except under $\epsilon=0.5$ and $\epsilon=5$, respectively. As anticipated, the performance of all methods decreases with lower $\epsilon$ values, and increases with higher $\epsilon$ values, reflecting the trade-off between privacy and accuracy. However, the variation in performance is marginal in most of the cases, and the same conclusions as those given in Section \ref{sec::sim-study} hold across both privacy levels.
In addition, we present Figures \ref{fig:app_single_eps} and \ref{fig:app_single_methods}, which give a comprehensive summary of the results for \texttt{DPBoxplot} from this simulation study. 

\begin{figure}[!t]
\centering
\includegraphics[width=\textwidth]{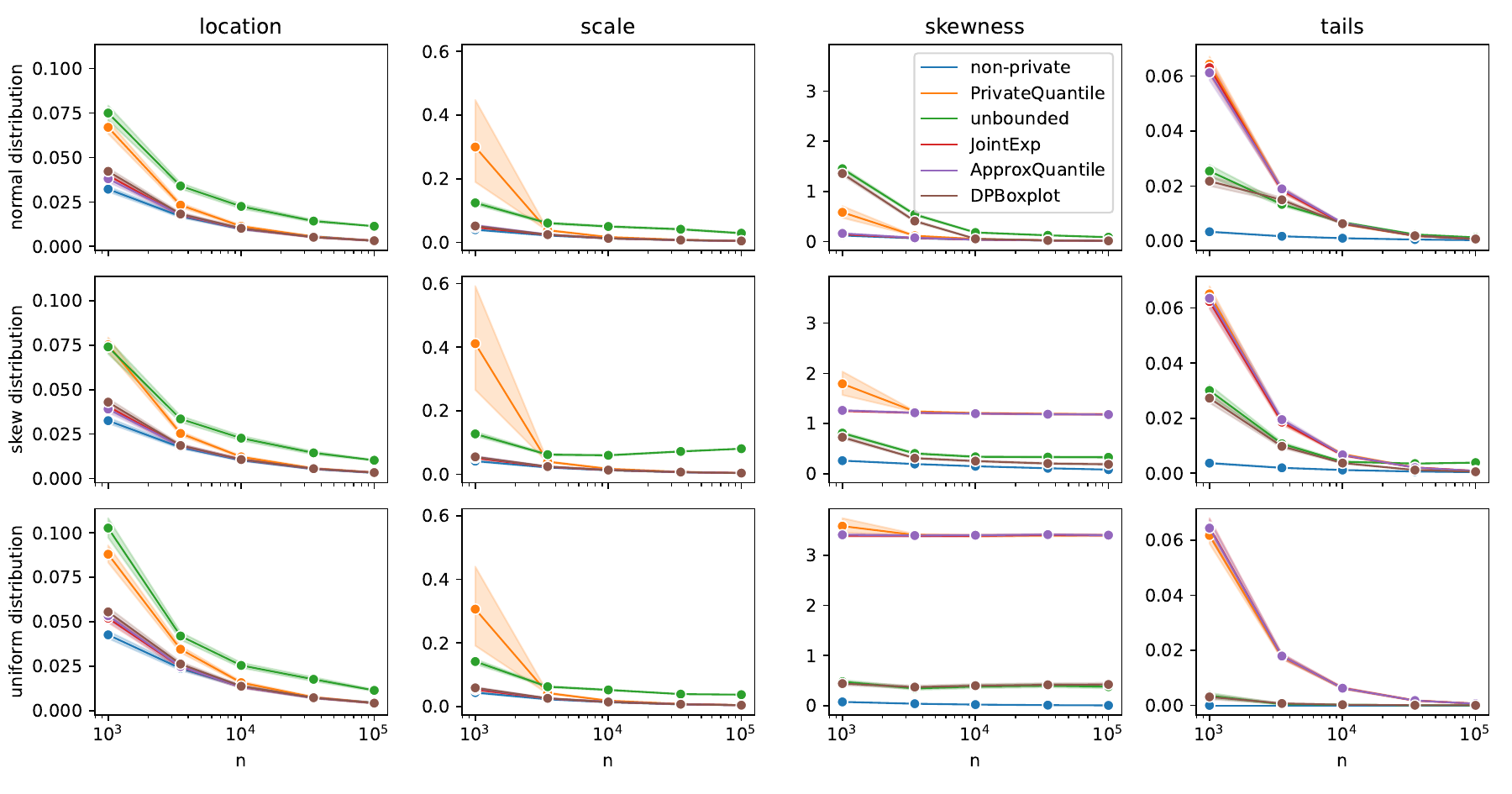}
\caption{Average error in each metric with 95\% confidence intervals between the different differentially private boxplots and their population (\texttt{oracle}) counterparts (y-axis) with $\epsilon = 0.5$  and increasing $n$ (x-axis). 
The \texttt{DPBoxplot} algorithm is presented, along with the naive method paired with each of the algorithms \texttt{JointExp}, \texttt{ApproxQuantile}, \texttt{unbounded} and \texttt{PrivateQuantile} (line color). 
The different generated distributions are represented row-wise, and the different metrics: location, scale, skewness and tails are presented column-wise. The \texttt{non-private} line is the distance between the non-private boxplot and its population counterpart.}
\label{fig:app_epsilon05}
\end{figure}

\begin{figure}[!t]
\centering
\includegraphics[width=\textwidth]{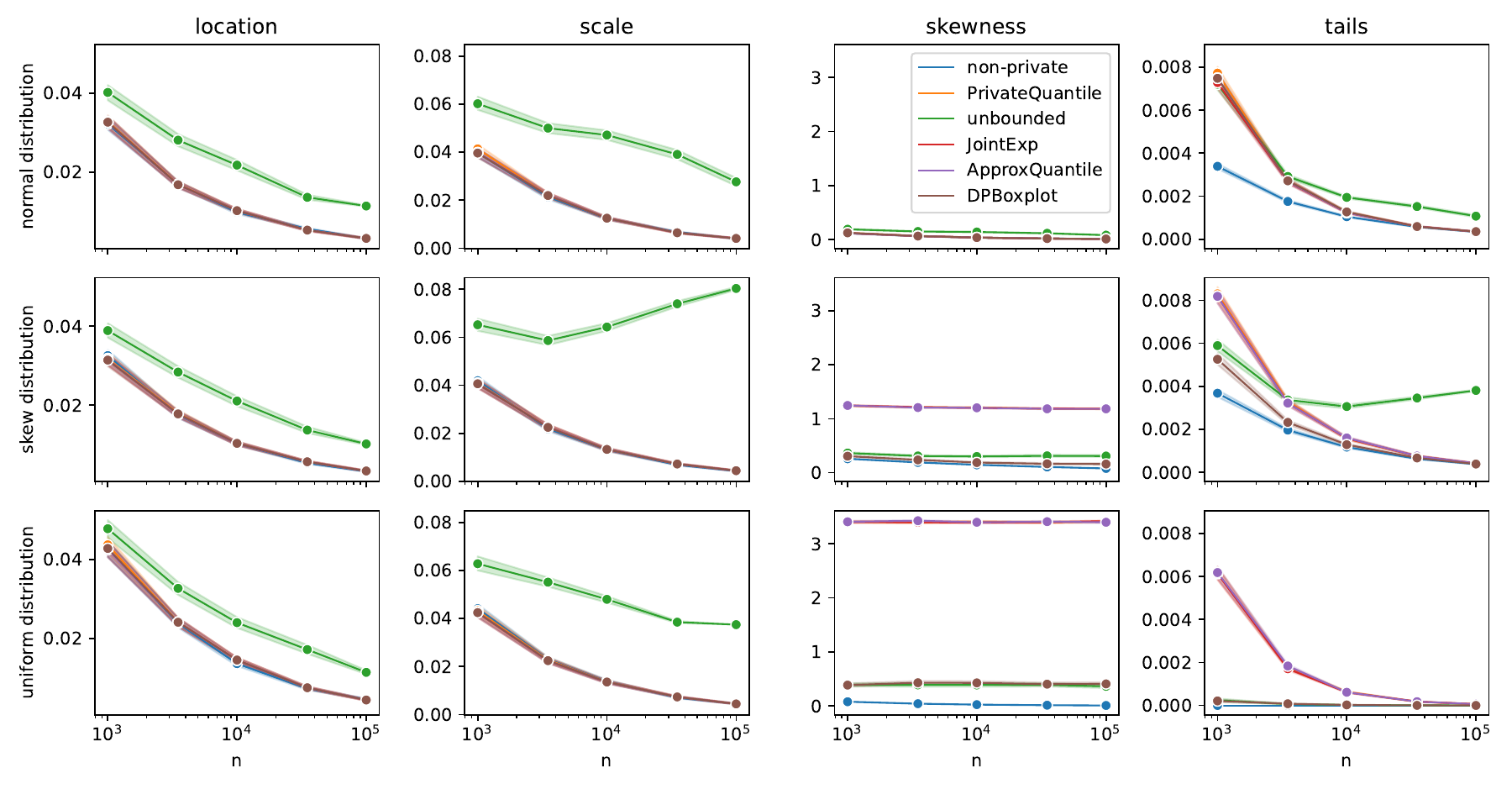}
\caption{Average error in each metric with 95\% confidence intervals between the different differentially private boxplots and their population (\texttt{oracle}) counterparts (y-axis) with $\epsilon = 5$  and increasing $n$ (x-axis). 
The \texttt{DPBoxplot} algorithm is presented, along with the naive method paired with each of the algorithms \texttt{JointExp}, \texttt{ApproxQuantile}, \texttt{unbounded} and \texttt{PrivateQuantile} (line color). 
The different generated distributions are represented row-wise, and the different metrics: location, scale, skewness and tails are presented column-wise. The \texttt{non-private} line is the distance between the non-private boxplot and its population counterpart.}
\label{fig:app_epsilon5}
\end{figure}

\begin{figure}[!t]
\centering
\includegraphics[width=\textwidth]{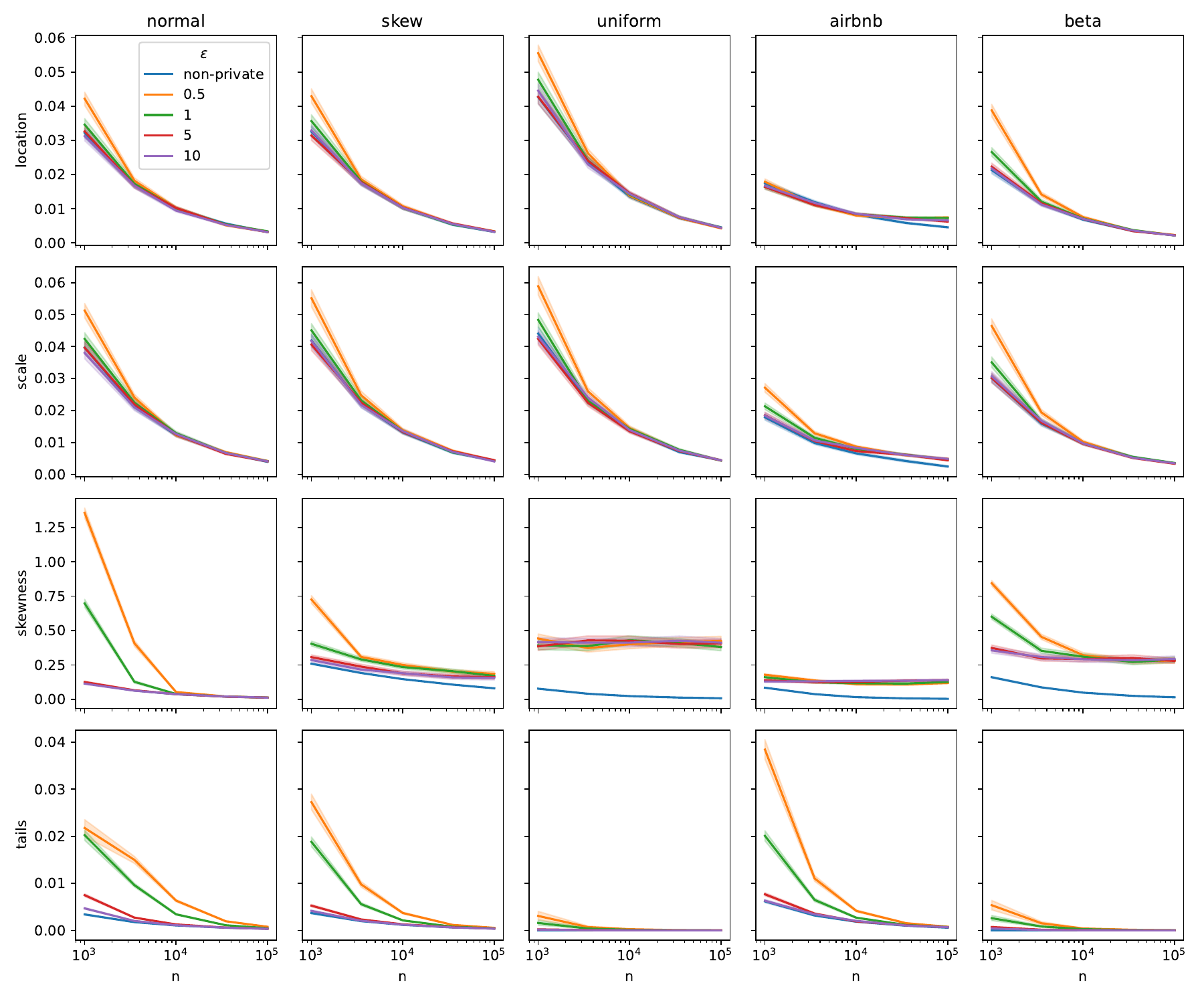}
\caption{Average error in each metric with 95\% confidence intervals between estimated differentially private boxplots and \texttt{oracle} boxplot (y-axis) with $\epsilon \in \{0.5,1,5,10\}$ (line colors) and increasing $n$ (x-axis); using \texttt{DPBoxplot} for private boxplot estimation;  from all sampled distributions (column-wise) comparing location, scale, skewness and tails (row-wise). 
In the legend, \texttt{non-private} refers to the distance between the non-private boxplot and the population boxplot.}
\label{fig:app_single_eps}
\end{figure}

\begin{figure}[!t]
\centering
\includegraphics[width=\textwidth]{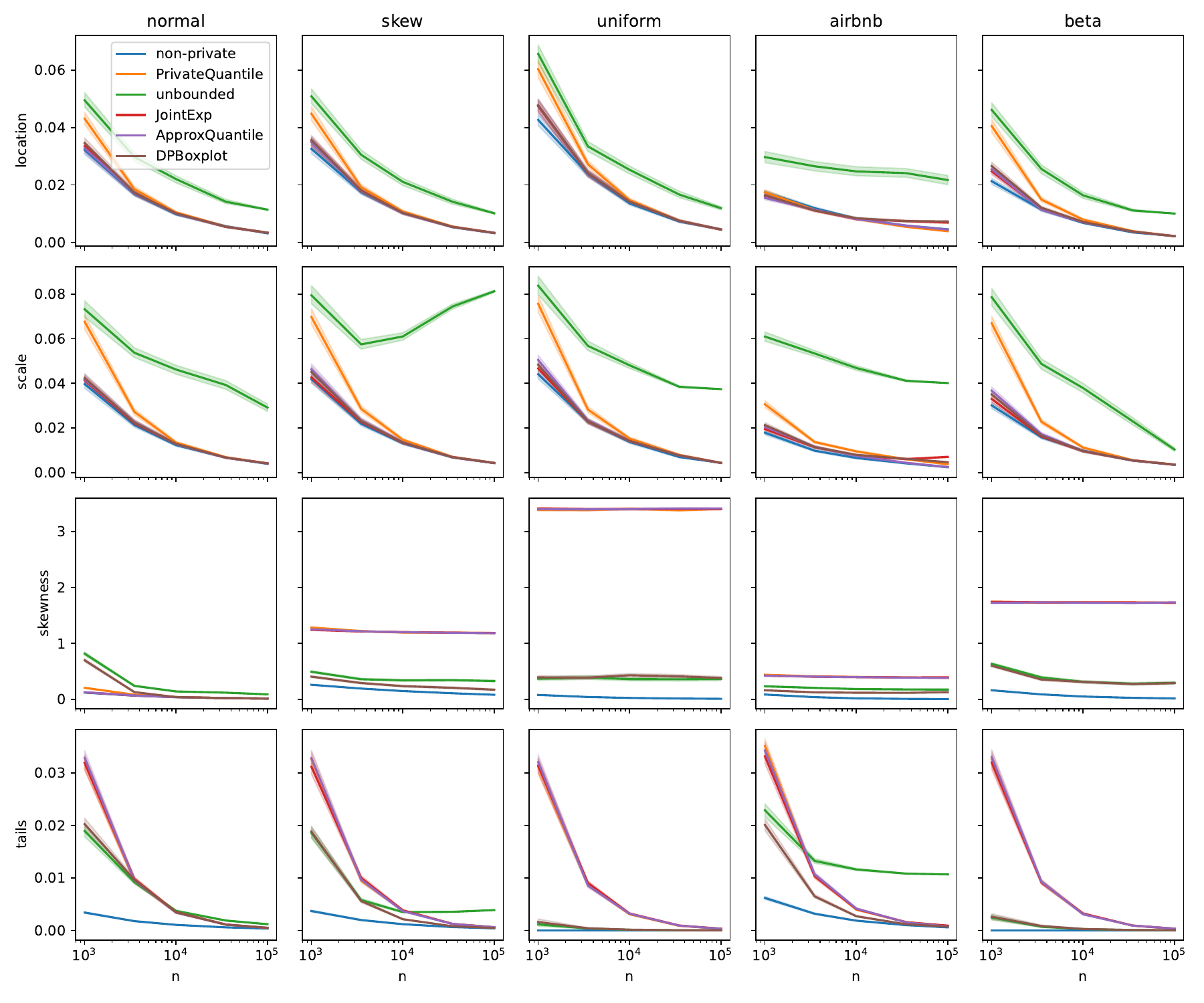}
\caption{Average error in each metric with 95\% confidence intervals between estimated differentially private boxplots and \texttt{oracle} boxplots (y-axis) with $\epsilon=1$  and increasing $n$ (x-axis); using boxplot estimation methods \texttt{JointExp}, \texttt{ApproxQuantile}, \texttt{unbounded}, \texttt{PrivateQuantile}, and \texttt{DPBoxplot} (line colors);  from all sampled distributions (column-wise) comparing location, scale, skewness and tails (row-wise). In the legend, \texttt{non-private} refers to the distance between the non-private boxplot and the population boxplot.}
\label{fig:app_single_methods}
\end{figure}

\subsection{Parameter tuning}
\label{app:sim-params}
We performed simulations under similar settings as those discussed in Section \ref{app::single-sim}. We varied $\lambda_n \in \{n^{-1/2}, n^{-1/4}, 1\}$, fixing $\epsilon = 1$ and estimated differentially private boxplots with all methods. Figure~\ref{fig:app_params} summarizes the results for simulated distributions (column-wise), boxplot metrics (row-wise). Variations in the method and $\lambda_n$ and are represented with different line colors and line styles, respectively.
We observe that none of the methods are highly sensitive to the chosen parameters, except for skewness and tails. 
Moreover, \texttt{DPBoxplot} was generally the best regardless of $\lambda_n$. Based on this observation and to account for better results in small samples, we concluded that $\lambda_n = n^{-1/4}$ was a reasonable choice for all methods. 

\begin{figure}[!t]
\centering
\includegraphics[width=\textwidth]{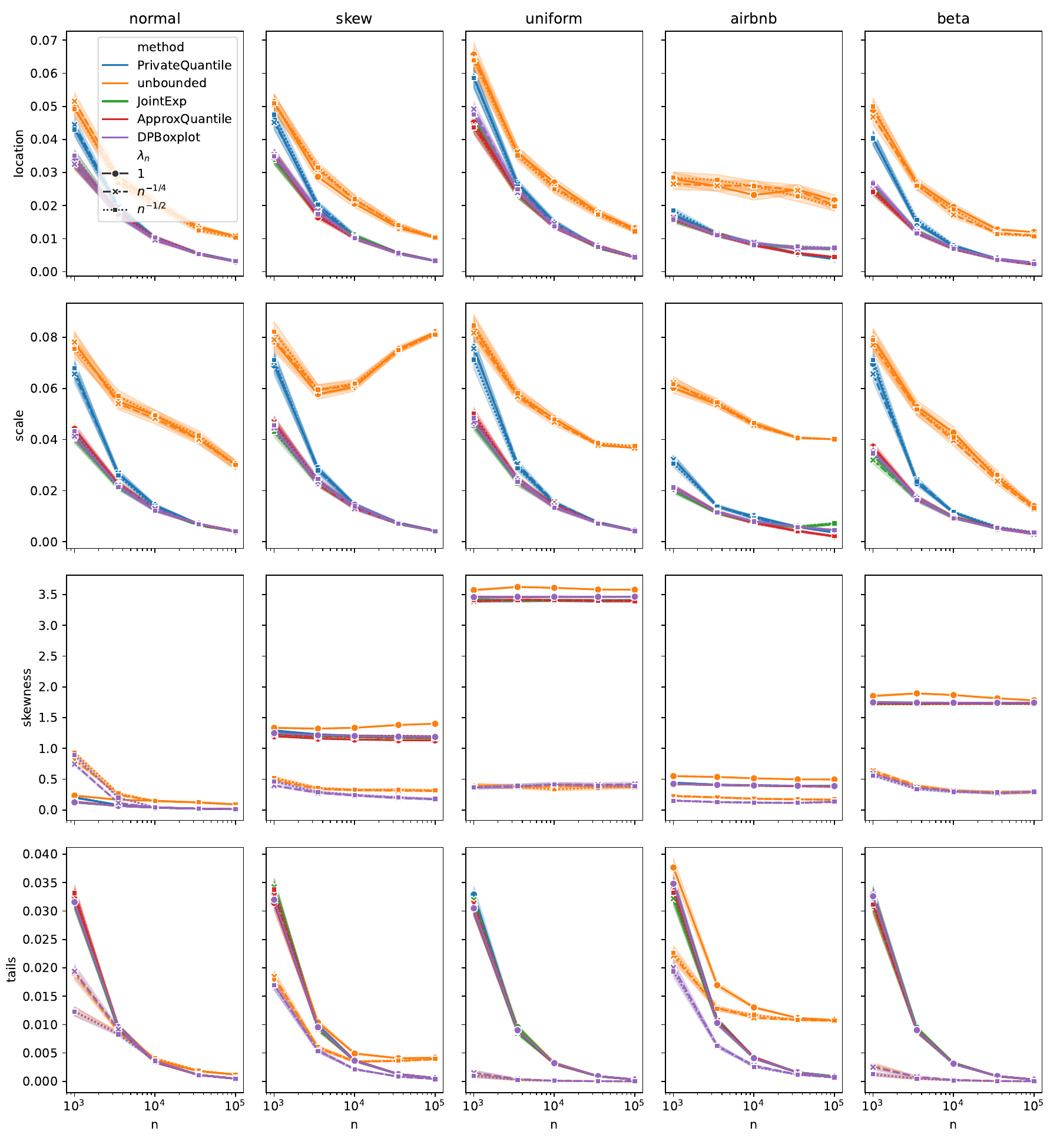}
\caption{Average error in each metric with 95\% confidence intervals between estimated differentially private boxplots (y-axis) and \texttt{oracle} boxplots with $\epsilon=1$  and increasing $n$ (x-axis); using various methods (color), with varying $\lambda_n$ (line style);  from all sampled distributions (column-wise) comparing location, scale, skewness and tails (row-wise).}
\label{fig:app_params}
\end{figure}

\subsection{Multiple boxplot estimation}\label{app::multiple-sim}
This section presents an extensive simulation study to assess the differentially private boxplot's ability to describe, and facilitate comparison of multiple samples. 
Specifically, we analyze whether the metrics (quantified by distances on location, scale, skewness and tails as in Section \ref{sec::sim-study}) between differentially private boxplots across distinct datasets mirror the corresponding pairwise distances observed between their population counterparts. 
The similitude between two distinct differentially private boxplots is anticipated to align closely with the similitude observed between their respective population counterparts, thereby facilitating analogous visual comparisons and interpretation. For the purpose of quantifying this phenomenon, we introduce the concept of relative similitude between two differentially private boxplots as follows:

$$\tilde{d}\left(\tilde B(\mathbf{X}_n,\epsilon), \tilde B(\mathbf{Y}_m,\epsilon)\right) = \left|1-\frac{d\left(\tilde B(\mathbf{X}_n,\epsilon), \tilde B(\mathbf{Y}_m,\epsilon)\right)+1}{d\left(B(\nu_X), B(\nu_Y)\right)+1}\right|.$$

Here, distance is related to the four metrics previously described.
In order to empirically validate the consistency of our proposed approach with this behavior, we conducted Monte Carlo simulations by generating $t$ datasets $(\mathbf{z}_1, \mathbf{z}_2, \cdots, \mathbf{z}_t)$, and two vectors $\mathbf{m}$ and $\mathbf{s}$ of size $t$ sampling from uniform distributions with interval $[-1,1]$ and $[0.5,2]$, respectively. Each vector $\mathbf{z}_i$ is size $n_i$ for $i\in\{1,2,\cdots t\}$ where $n_i$ is chosen randomly such that $n=n_1+n_2+\cdots n_t$. Here, $t$ plays the role of the number of treatments in an experiment. We replicated this simulation 1000 times for each combination of $t\in{4,8}$ and $n\in\{500,5000,50000\}$ leading to datasets $(\mathbf{X}_1, \mathbf{X}_2, \cdots, \mathbf{X}_t)$ such that $\mathbf{X}_i=s_i\mathbf{z}_i+m_i$. We performed this simulation generating the initial $t$ datasets using mixture of the five distributions described in the previous section such that each distribution generates the same amount of vectors on each replica.
For each dataset we calculated non-private boxplots $B(\mathbf{X}_1), B(\mathbf{X}_2), \cdots, B(\mathbf{X}_t)$, and differentially private boxplots using each of the quantile estimation methods and for each $\epsilon\in\{0.5, 1,5,10\}$. We then calculated pair-wise relative boxplot distances for every pair of differentially private boxplots.

Figure~\ref{fig:multiple_methods} and \ref{fig:multiple_eps} offers a detailed overview of the results obtained from various scenarios considered. Each boxplot within these plots represents the average pairwise distances observed among all differentially private boxplots generated in a specific scenario. 
The $x$-axis on each plot denotes the  sample size ($n$). Variations in hue color correspond to different methods and different privacy budgets ($\epsilon$), respectively. The first and last two plots in the first row correspond to relative location and relative scale, respectively. 
The first and last two plots in the second row correspond to relative skewness and relative outliers, respectively. For each pair of relative distance plots, the first and second visualization correspond to $t=5$ and $t=10$, respectively. Figure~\ref{fig:multiple_eps} uses $\epsilon = 1$ and Figure~\ref{fig:multiple_methods} shows results with $\texttt{DPBoxplot}$ method.

Figure~\ref{fig:multiple_methods} shows that \texttt{DPBoxplot} performs consistently well across different scenarios. All other methods have poor performance in at least one scenario. In Figure~\ref{fig:multiple_eps} relative distances exhibit a diminishing trend with augmented sample sizes and $\epsilon$, consistent with expectations illustrated by the \texttt{non-private} boxplot. A marginal increase in the parameter $t$ leads to a slight augmentation in the relative distances, attributable to the partitioning of data into smaller subsets and consequent reduction in sample sizes; however, this impact appears to be negligible. Hence, our analysis suggests that our proposed methodology effectively maintains the visual coherence of relative similarities among diverse boxplots during the execution of multiple comparisons.

\begin{figure}[!t]
\centering
\includegraphics[width=\textwidth]{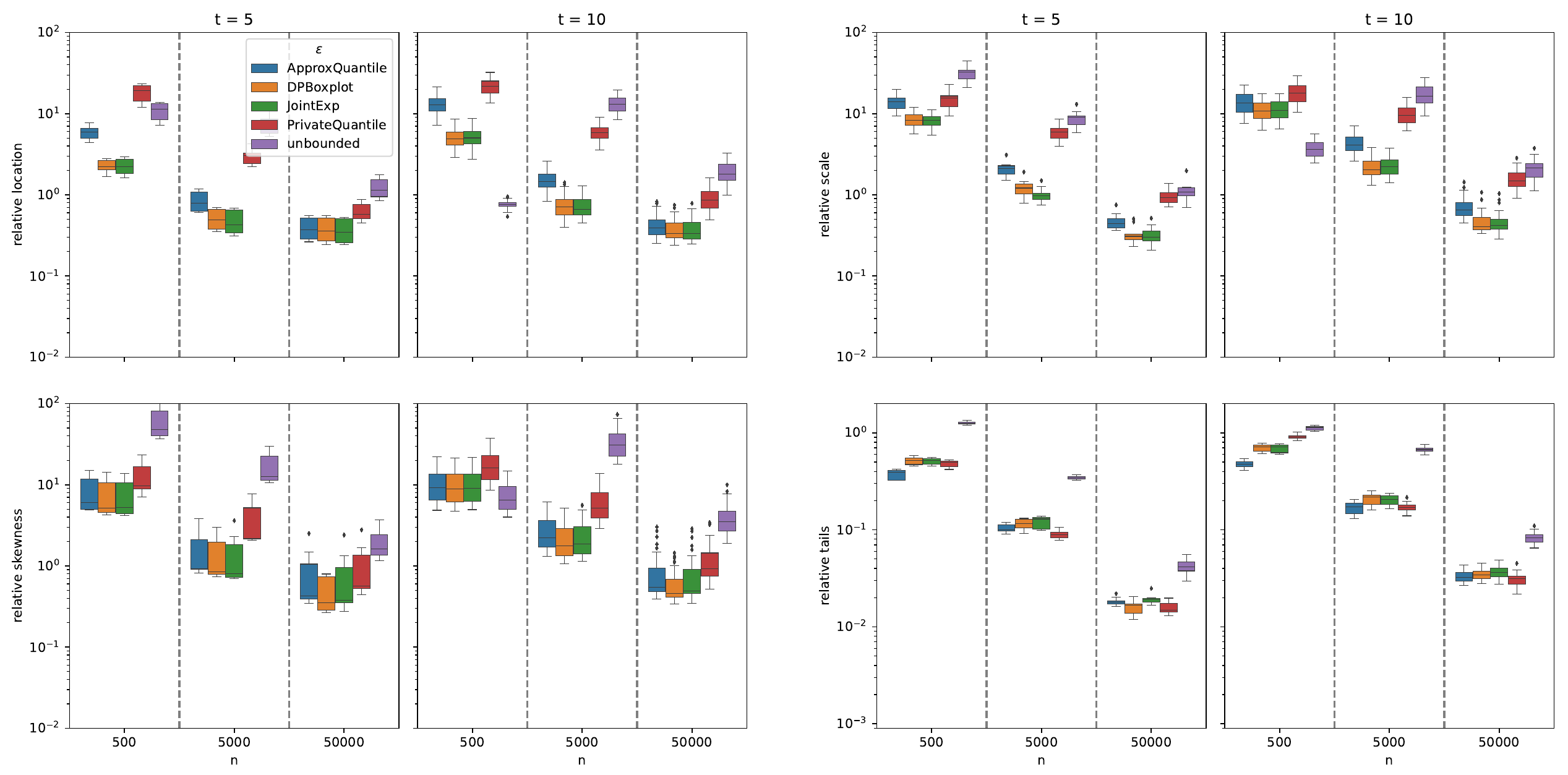}
\caption{Average pair-wise relative boxplot distances (y-axis) between multiple sample boxplots in terms of relative location, scale, skewness and outliers, with $\epsilon = 1$ using different boxplot estimations (color), $n\in\{500,5000, 50000\}$ (x-axis) and $t\in\{5,10\}$ (columns). Legend \texttt{non-private} is the relative similitude of the non-private boxplots.}
\label{fig:multiple_methods}
\end{figure}

\begin{figure}[!t]
\centering
\includegraphics[width=\textwidth]{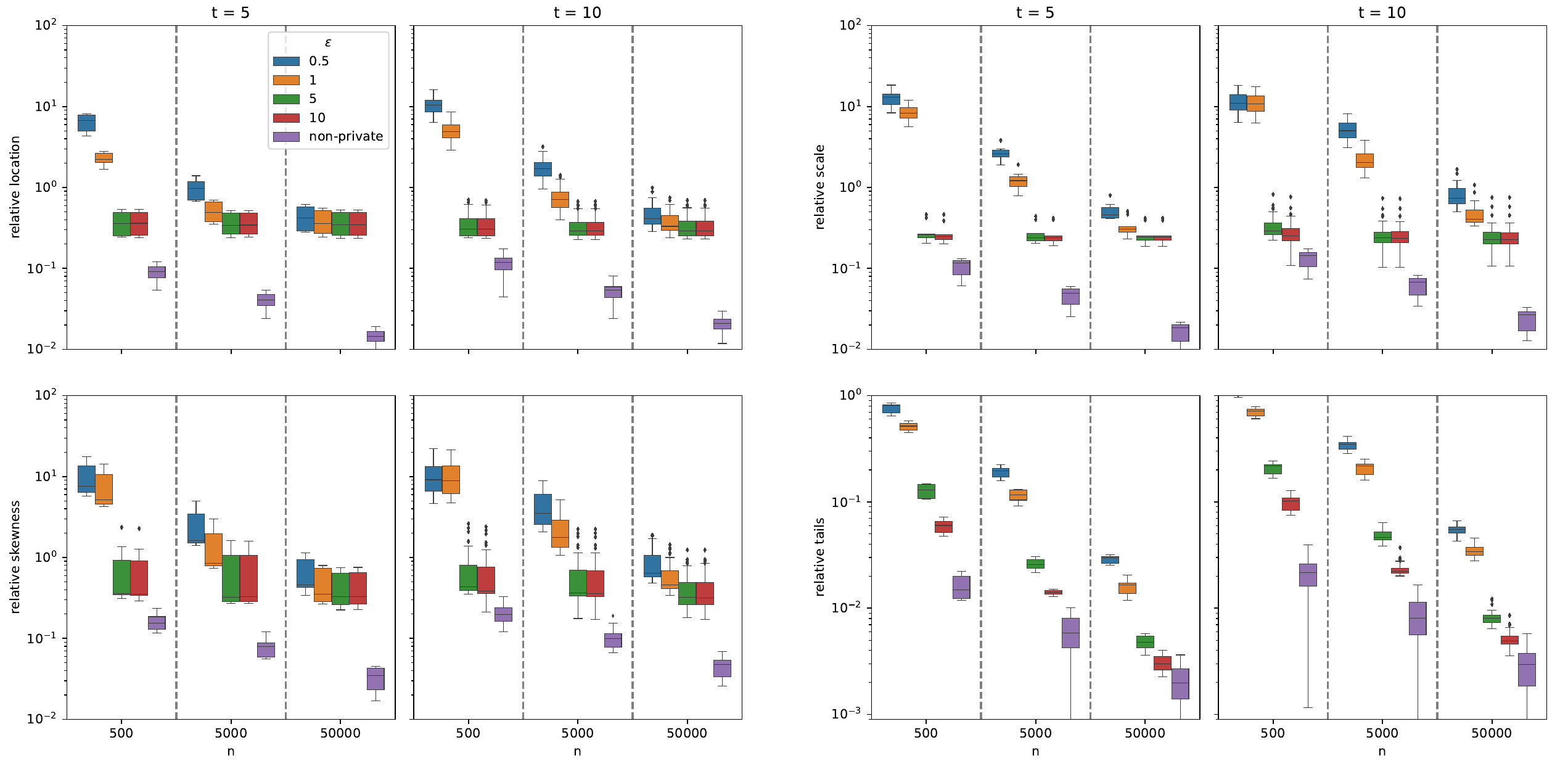}
\caption{Average pair-wise relative boxplot distances (y-axis) between multiple private boxplots in terms of relative location, scale, skewness and outliers, under varying conditions of $\epsilon \in \{0.5, 1,5, 10\}$ (color), $n\in\{500,5000, 50000\}$ (x-axis); using \texttt{DPBoxplot} quantile estimation and $t\in\{5,10\}$ (columns). Legend \texttt{non-private} is the relative similitude between the non-private boxplots.}
\label{fig:multiple_eps}
\end{figure}

\subsection{Computational requirements}
All simulations were conducted on a single CPU and did not require significant computational resources. The execution time for the simulations presented in the paper did not exceed 24 hours.

\section{Differentially private quantiles}\label{app::pqe}
Here, we include a longer review of the differentially private quantile estimation literature. 
We focus on the existing statistical results for such estimators, as statistical results for private quantiles is one of our main contributions. 
First, \citet{Smith2011} introduced \texttt{PrivateQuantile} based on the empirical CDF and the exponential mechanism, and present a finite sample accuracy guarantee, under the assumption that the population distribution is close to the normal distribution. 
An extension of this algorithm is \texttt{JointExp}, which can be used to estimate multiple quantiles \citep{Gillenwater21a}. Later, \citet{Lalanne2023a} and \citet{Lalanne23b} considered the statistical properties of \texttt{JointExp}. 
\citet{Lalanne23b} show various concentration results, assuming that the population density has bounded support  and is positive in a neighbourhood of the quantiles being estimated. 
\citet{Lalanne2023a} show that \texttt{JointExp} is consistent when the population distribution is continuous, and provide a modified algorithm which can be consistent for population distributions with atoms. 
This algorithm could also be used to generate the box in the boxplot, if atoms are suspected. 
Earlier, \citet{asi2020near} provided an instance optimal algorithm for the median, based on the inverse sensitivity mechanism. 
We do not consider this algorithm here because it was shown by \citet{Lalanne2023a} that the algorithm is very similar to \texttt{PrivateQuantile}. 
Similarly, a multiple quantile version based on the inverse sensitivity mechanism is similar to \texttt{JointExp} \citep{Lalanne2023a}. 
\citet{tzamos2020optimal} also consider private median estimation, which could be used for private quantile estimation, however, their algorithm takes $n^4$ time, which may be slow in practice, so we did not consider it here. However, we see no reason it would perform poorly to generate the median line in the box. 
Recently, \cite{Alabi2022} present a bounded space quantile algorithm, with some concentration bounds that condition on the dataset at hand, and require the sample space to be finite. 
They also present some statistical guarantees under the assumption of normality. 
On a related note, some general works imply that no differentially private quantile estimate can have finite sample complexity without making distributional assumptions, .e.g, \citep{Bun2015}. 
Recently, \citet{Durfee2023} present the \texttt{unbounded} algorithm for quantile estimation, which is meant to estimate extreme quantiles well. They do not provide statistical guarantees. 
Recently, \citet{Kaplan22} introduce a clever algorithm for estimating multiple quantiles from an existing quantile algorithm, where the statistical error has minimal dependence on the number of quantiles. 
We call this algorithm paired with \texttt{PrivateQuantile} the \texttt{ApproxQuantile}. 
We build on this literature by providing some new statistical results concerning \texttt{JointExp}, \texttt{PrivateQuantile}, \texttt{ApproxQuantile} and \texttt{unbounded}. 
Specifically, we show consistency of the unbounded algorithm under general distributional assumptions, a lower bound on differentiall private quantile estimation for a large class of distributions, and that \texttt{JointExp}, \texttt{PrivateQuantile}, and \texttt{ApproxQuantile} provide minimax optimal, single quantile estimates. 
We also show that \texttt{JointExp}, \texttt{PrivateQuantile}, and \texttt{ApproxQuantile} are inconsistent for extreme quantiles.

\section{Auxiliary results}
\begin{lemma}\label{lem:logn}
For all $n\geq 1$, $a,c\in\re$, we have that 
$n\geq a\log n+c$ if $n\geq 2(a\log a+c)\vee a$.
\end{lemma}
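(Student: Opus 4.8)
The plan is to prove the elementary inequality in Lemma~\ref{lem:logn}, namely that $n \geq a\log n + c$ holds whenever $n \geq 2(a\log a + c)\vee a$. This is a calculus-type statement about when a line dominates a logarithm, so I would approach it by analyzing the function $g(n) = n - a\log n - c$ and showing $g(n) \geq 0$ on the stated range. The key structural fact is that $g$ is eventually increasing: $g'(n) = 1 - a/n > 0$ once $n > a$, so once we are past $n = a$ the function $g$ is monotonically increasing, and it suffices to verify the inequality at the single threshold point $n_0 = 2(a\log a + c)\vee a$.

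The main step is therefore to check that $g(n_0) \geq 0$ at $n_0 = 2(a\log a + c)$ (the case where this term is the larger of the two in the max; if instead $n_0 = a$ dominates, then $2(a\log a+c) \leq a$ and the bound is even easier since we start with extra slack). Substituting, I would need to verify $2(a\log a + c) \geq a\log\bigl(2(a\log a+c)\bigr) + c$, i.e. $2a\log a + c \geq a\log\bigl(2(a\log a + c)\bigr)$. The natural tool here is the standard logarithm bound $\log x \leq x/e$ or, more usefully, a bound of the form $\log(xy) = \log x + \log y$ combined with $\log n \leq n/2$ type estimates to absorb the logarithm into the linear term. I would split $\log\bigl(2(a\log a + c)\bigr)$ using the product/sum structure and bound each piece so that it is dominated by the available budget $2a\log a + c$.

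The hard part will be handling the case analysis cleanly, since $a$ and $c$ are allowed to be arbitrary reals (the statement says $a,c\in\re$), so I must be careful about regimes where $a \leq 1$ (making $\log a \leq 0$) or where $c$ is negative. In those degenerate regimes the quantity $a\log a + c$ can be small or negative, and the $\vee a$ in the threshold is exactly what rescues the statement; so the cleanest route is probably to treat the two branches of the maximum separately and, within the branch $n_0 = 2(a\log a+c)$, to use monotonicity to reduce to a one-variable check. I expect the only real subtlety is making the logarithmic-absorption inequality airtight across all sign patterns of $\log a$ and $c$, rather than any deep idea; the proof is fundamentally a monotonicity argument plus one explicit estimate at the boundary.
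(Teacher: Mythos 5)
Your plan is correct and matches the paper's proof in essence: both rest on the monotonicity of $h(n)=n-a\log n-c$ for $n\geq a$ together with an estimate of the form $\log x\leq x/2$ for $x\geq 1$. The paper applies that estimate globally (substituting $x=n/a$ to get $h(n)\geq n/2-a\log a-c$ for all $n\geq a$), which yields the claim immediately and sidesteps the boundary-point case analysis you anticipate; carrying out your boundary check at $n_0=2(a\log a+c)$ reduces to exactly the same inequality with $x=n_0/a$.
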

\begin{proof}
First, note that $h(n)=n- a\log n-c$ is increasing for $n\geq a$. 
Now, for $n\geq a$ we can use the fact that $x-\log x\geq x/2$ for $x\geq 1$ to get $h(n)\geq 0$ whenever $n\geq 2(a\log a+c)$.
\end{proof}
Let $\sB$ denote the space of Borel functions from $\rdd$ to $[0,1]$. 
For a family of functions, $\sF\subseteq\sB$, define a pseudometric on $\cM_1(\rdd)$,  $d_\sF(\mu,\nu) = \sup_{g\in\sF}|\int gd(\mu-\nu)|$, where $\mu,\nu\in\cM_1(\rdd)$. 
\begin{definition}[\cite{2022Ramsay}]\label{def::kf-reg}
We say that $\phi(x,\nu)$ is \emph{$(K,\sF)$-regular} if there exists a class of functions $\sF\subset \mathscr{B}$ such that $\phi(x,\cdot)$ is $K$-Lipschitz with respect to the $\sF$-pseudometric uniformly in $x$, i.e., for all $\mu,\nu\in\cM_1(\rdd)$
\[
\sup_{x}|\phi(x,\mu)-\phi(x,\nu) | \leq K d_\sF(\mu,\nu).
\]
\end{definition}


\end{document}